\newtheorem{theorem}{Theorem}[section]
\newtheorem{definition}{Definition}[section]
\newtheorem{lemma}{Lemma}[section]
\newcommand*{\C}{\mathbb{C}}
\newcommand*{\Q}{\mathbb{Q}}
\newcommand*{\R}{\mathbb{R}}
\newcommand*{\Z}{\mathbb{Z}}
\newcommand*{\F}{\mathbb{F}}
\begin{document}


\title[Fast Multiplication and the PLWE--RLWE Equivalence for an Infinite Family of Maximal Real Subfields of Cyclotomic Fields]{Fast Multiplication and the PLWE--RLWE Equivalence for an Infinite Family of Maximal Real Subfields of Cyclotomic Fields}




\author[1]{\fnm{Joonas} \sur{Ahola}}\email{joonas.ahola@aalto.fi}

\author[2]{\fnm{Iv\'an} \sur{Blanco-Chac\'on}}\email{ivan.blancoc@uah.es}

\author[3]{\fnm{Wilmar} \sur{Bolaños}}\email{\\\faEnvelope[regular] wilmar.bolanos@aalto.fi}
\author[4]{\fnm{Antti} \sur{Haavikko}}\email{anttijhaavikko@gmail.com}
\author[5]{\fnm{Camilla} \sur{Hollanti}}\email{camilla.hollanti@aalto.fi}
\author[6]{\fnm{Rodrigo M.} \sur{S\'anchez-Ledesma}}\email{rodrma01@ucm.es}


\affil[1]{, \orgname{Huawei Technologies Oy}, \orgaddress{ \city{Helsinki}, \country{Finland}}}

\affil[2,4]{\orgname{Universidad de Alcal\'a}, \orgaddress{ \city{Alcal\'a de Henares}, \country{Spain}}}

\affil[3,5]{\orgname{Aalto University}, \orgaddress{ \city{Espoo}, \country{Finland}}}



\affil[6]{ \orgname{University Complutense of Madrid / Indra}, \orgaddress{ \city{Madrid}, \country{Spain}}}


\abstract{We prove the equivalence between the Ring Learning With Errors (RLWE) and the Polynomial Learning With Errors (PLWE) problems for the maximal totally real subfield of the $2^r 3^s$-th cyclotomic field for $r \geq 3$ and $s \geq 1$. Moreover, we describe a fast algorithm for computing the product of two elements in the ring of integers of these subfields. This multiplication algorithm has quasilinear complexity in the dimension of the field, as it makes use of the fast Discrete Cosine Transform (DCT). Our approach assumes that the two input polynomials are given in a basis of Chebyshev-like polynomials, in contrast to the customary power basis. To validate this assumption, we prove that the change of basis from the power basis to the Chebyshev-like basis can be computed with $\mathcal{O}(n \log n)$ arithmetic operations, where $n$ is the problem dimension. Finally, we provide a heuristic and theoretical comparison of the vulnerability to some attacks for the $p$-th cyclotomic field versus the maximal totally real subextension of the $4p$-th cyclotomic field for a reasonable set of parameters of cryptographic size.}

\keywords{Number Theoretic Transform, Discrete Cosine Transform, Fast multiplication, Ring Learning with Errors, Polynomial Learning with Errors, Condition number, Abelian number fields}



\maketitle

\section*{Acknowledgements}
I. Blanco-Chacón, W. Bolaños, A. Haavikko, and C. Hollanti were supported in part by the Research Council of Finland (grant \#351271). C. Hollanti was also supported by the Finnish Ministry of Defence, MATINE (grant \#2500M-0147). I. Blanco-Chacón would like to acknowledge the visiting professor funding from the Aalto Science Institute and the Department of Mathematics and Systems Analysis at Aalto University, Finland. R. M. Sánchez-Ledesma was supported in part by the PQReact project, which has received funding from the European Union’s Horizon Europe research and innovation program under grant agreement \#101119547.

\section*{Statements and Declarations}
There are no competing financial nor non-financial interests that are directly or indirectly related to the work submitted for publication.

\section{Introduction}

Out of the first three post-quantum cryptography (PQC) standards published in 2024 by the National Institute of Standards and Technologies (NIST), two of them, namely ML-KEM \cite{FIPS:2024:203} and ML-DSA \cite{FIPS:2024:204}, belong to the \textit{lattice-based} family. This is not surprising, as this family has been dominant throughout all the stages of the standardization process: lattice-based cryptography has been the paradigm which has 
had the most schemes based upon, up until the final rounds. This is due to the fact that lattice-based post-quantum primitives can provide both acceptable key sizes and efficient performance while maintaining a solid understanding of the underlying mathematical hardness. The fact that lattice-based schemes enjoy homomorphic properties and can be turned into Fully Homomorphic Encryption (FHE) primitives is another important reason for the prevalence of the lattice family.

\subsection{LWE and its structured variants}

Within lattice-based cryptography, there exists one paradigm that has prevailed due to its simplicity, efficiency and versatility. It is referred to as the \textit{Learning With Errors (LWE)} paradigm. Intuitively, it amounts to the hardness of solving linear systems that have been perturbed by random noise. 

There exists a number of variants of this paradigm that use different mathematical structures for the data that forms the system. In purely LWE schemes, also referred to as \textit{unstructured} LWE variants, the mathematical structure employed is simply $\mathbb{Z}_q = \Z / q\Z$, the ring of rational integers modulo a rational prime $q$. 

The seek for significantly smaller key sizes for the sake of practical deployment and feasibility brought about \textit{structured} LWE variants, most notably, \textit{Ring LWE} and \textit{Polynomial LWE}. In these paradigms, the mathematical structure from which the terms are drawn is a more complex ring. For RLWE, the underlying ring is $R_q = \mathcal{O}_K / q\mathcal{O}_K$, where $\mathcal{O}_K$ is the ring of integers of a certain number field $K$. For PLWE we use $R_q=\mathcal{O}_f/q\mathcal{O}_f$, where $\mathcal{O}_f=\Z[x]/(f(x))$ and $f(x)\in \Z[x]$ is  monic and irreducible. 

To make our work self-contained it is convenient to recall the following definitions and facts:

\begin{definition}[The R/PLWE distributions]
Let $K$ be a number field, and $\mathcal{O}_K$ its ring of integers. Let $q$ be a rational prime, $f(x) \in \Z[x]$ a monic irreducible polynomial in $\Z[x]$, and $\mathcal{O}_f$ the associated quotient ring $\Z[x]/(f(x))$. Let $\chi$ be a discrete random distribution with values on $\mathcal{O}_K/q\mathcal{O}_K$ 
(resp. $\mathcal{O}_f/q\mathcal{O}_f$). For $s \in$ 
$\mathcal{O}_K/q\mathcal{O}_K$ (resp. $\mathcal{O}_f/q\mathcal{O}_f$), we define the (primal) RLWE (resp. PLWE) distribution $\mathcal{A}_{s, \chi}$ 
(resp. $\mathcal{B}_{s, \chi}$) as the distribution over 
$\mathcal{O}_K/q\mathcal{O}_K \times \mathcal{O}_K/q\mathcal{O}_K$ 
(resp. $\mathcal{O}_f/q\mathcal{O}_f \times \mathcal{O}_f/q\mathcal{O}_f$) obtained by sampling an element $a$ uniformly from $\mathcal{O}_K/q\mathcal{O}_K$ (resp. $\mathcal{O}_f/q\mathcal{O}_f$), $e$ drawn according to $\chi$, and outputting the pair $(a, a\cdot s + e)$.
\end{definition}

Associated with these distributions, the RLWE/PLWE search and decision problems are defined as follows:
\begin{definition}[R/PLWE problems]
Following the same notation as above, we have:

Search RLWE (resp. PLWE) asks an adversary to return the secret $s$ with non-negligible probability when this adversary is given access to arbitrarily many samples of the RLWE (resp. PLWE) distribution.

Decision RLWE (resp. PLWE) asks the adversary to decide whether a given random distribution is either uniform or the RLWE (resp. PLWE) distribution, with non-negligible probability when the adversary is given access to arbitrarily many samples of that given random distribution. 
\end{definition}

In literature, the most common choice is to use cyclotomic extensions $K = \Q(\zeta_{2n})$ with minimal polynomials $f(x) = x^n + 1$, where $n$ is a power of two or similarly the polynomial quotient rings $\mathcal{O}_f = \Z[x] / (x^n + 1)$. In fact, for this choice of $K$ and $f(x)$, the two rings $\mathcal{O}_K$ and $\mathcal{O}_f$ are isomorphic.

Note that for Galois extensions, the search problem can be effectively reduced to the decision problem \cite{EHL}. An even stronger result exists \cite{PRSD:2017:GRLWE}, providing a reduction for any ring and modulus.
Let us also mention that, for the RLWE distribution, we have given the definition of the \textit{primal} version. Another definition can be 
considered, by taking the secret $s$ and the support of $\chi$ over the trace-dual of the ring of integers $\mathcal{O}_{K}^{\vee}$. This is indeed the original definition of the RLWE problem in 
\cite{LPR:2013:ILL}. However, both problems were shown to be equivalent in \cite{RSW:2018:RPP}, and hence we will stick to the primal version for the sake of simplicity.

\hyphenation{par-a-digm}

Another relevant structured variant, as it is the paradigm used in the aforementioned standardized lattice-based schemes, is the \textit{Module 
LWE}. This paradigm, which loosely consists on working over $R_{q}^d$, is of great practical importance. The inclusion of the dimension $d$ allows a straightforward way of tuning the overall security of a scheme by simply modifying the parameter $d$, the dimension of the space. This modification has no practical impact on the mathematical object $R_q$ in which the operations take place. Therefore Module LWE provides a way to increase the security of the scheme without modifying the structure of $R_q$. 

It is worth noting that this cryptographic family has been found extremely useful not only in PQC but also in other relevant aspects of cryptography, most notably in \textit{Homomorphic Encryption} (HE). This type of cryptography inherits its name from the mathematical notion of \textit{homomorphisms}: intuitively, it allows to perform certain operations on the encrypted plaintext, in the same way as if we decrypt the ciphertext and apply the operation over the plaintext, mimicking the properties of homomorphic maps. The impact of practical schemes that achieve this property cannot be overstated, especially in fields in which the privacy of data is of the utmost importance. Depending on the nature of the operations allowed, we can distinguish between \textit{partially homomorphic encryption} or 
\textit{fully homomorphic encryption}, among others. The first provably lattice-based homomorphic encryption scheme is the GPV scheme 
\cite{10.1145/1374376.1374407}. Thereafter, a vast number of HE schemes have been derived from the LWE paradigm and its structured variants 
\cite{10.1145/2090236.2090262,BFV:2012:144, 10.1007/978-3-319-70694-8_15, 10.1007/978-3-642-40041-4_5, 10.1007/s00145-019-09319-x, 10.1007/978-3-662-46800-5_24}.

The decision whether to choose an unstructured or a structured paradigm is of great importance regarding the practical implementation and use: for purely unstructured LWE schemes, the size of the cryptographic keys needed for key exchange and digital signature schemes is $\mathcal{O}(n^2)$, where $n$ is a security parameter that represents the desired strength of the cryptosystem. Note that, for cryptographically relevant instances, this can mean keys of tens of thousands of bytes, a steep increase, even in the PQC entourage. Meanwhile, the additional structure added by the RLWE and PLWE variants decreases the size of 
cryptographic keys to $\mathcal{O}(n)$. While this value, in the order of thousands of bytes, represents a significant increase from pre-quantum 
primitives, it is well within the accepted price to pay within PQC.

Regarding PLWE, another highly relevant feature is achieved from the multiplicative structure. The multiplication of elements within the ring, needed for a number of cryptographic subprocesses of key generation, encryption and digital signatures, can be made very efficient through a number of optimizations, such as the Karatsuba or the Toom--Cook algorithms. For cyclotomic fields, the situation is even more optimal. We can use the more efficient Number Theoretic Transform (NTT), a special case of the Discrete Fourier Transform over finite fields, to perform the multiplications in quasilinear order. While this algorithm can be defined in the general case, its performance reaches an optimal point over \textit{cyclic} and \textit{nega-cyclic} fields.

The structure added in these variants gives us the benefit of reductions to approximate lattice problems like the shortest vector problem (SVP) over ideal lattices. However, the added algebraic structure can be a source of additional attacks that are ineffective against the unstructured variants. Particularly for PLWE, there exists a number of relevant attacks which employ information about the roots of the polynomial $f(x)$ behind the PLWE instance. In \cite{ELOS:2015:PWI,ELOS:2016:RCN}, a number of attacks against the decisional (and sometimes the search) version of PLWE are presented, if the polynomial $f(x)$ has a root $\alpha$ of \textit{small} order or of \textit{small} residue in $\mathbb{F}_q$, by constructing a number of distinguishability sets made from the evaluation of the samples on the critical root. Moreover, a more subtle family of statistical attacks can be deployed under certain conditions discussed therein. More recently, in \cite{BDM:2024:GARB}, these attacks (those based on roots of small order/residue as well as the statistical ones) have been generalized to polynomials with roots belonging to arbitrary degree field extensions of $\mathbb{F}_q$.

\subsection{Relation between RLWE and PLWE problems}
The RLWE and PLWE paradigms were introduced roughly at the same time. In 2009, \cite{SSTX:2009:EPK} presented a structured variant of the LWE problem based on ideal lattices over rings of the form $\mathcal{O} = \Z[x]/(f(x))$. They focused on cyclotomic polynomials giving rise to the first technical description of the PLWE paradigm.

Meanwhile, in 2010, \cite{LPR:2013:ILL} introduced another structured variant of LWE, this time over the ring of integers of algebraic number fields. While most of the results in this work were generalized to arbitrary number fields, the complete reduction involving the pseudorandomness of the RLWE distribution, an important feat with regards to practical cryptographic instances, 
was only proved for cyclotomic fields. The complete arbitrary number field generalization that lifted the requirement of cyclotomic fields was proved in \cite{PRSD:2017:GRLWE}. 

As noted above, both approaches have important cryptographic qualities: the RLWE problem bases its security on a more established and studied problem and has not been prone to many attacks over its structure. This is contrary to existing approaches towards exploiting the information given in the PLWE problem, such as root-based approaches 
\cite{ELOS:2015:PWI,ELOS:2016:RCN,BDM:2024:GARB}.

On the other hand, PLWE-based schemes enjoy faster cryptographic operations than RLWE-based ones, propelled by the use of the aforementioned NTT, Toom--Cook or Karatsuba algorithms to perform polynomial multiplications. This optimization technique is specially efficient over cyclotomic fields. Therefore, it is natural to ask whether these two important practical considerations can be present simultaneously, i.e., is it possible to have a scheme whose security is based on the RLWE problem, but enjoys the performance of PLWE-based schemes. A natural way to achieve this result is to study the equivalence between these two problems. We define computational equivalence as follows:

\begin{definition}[Equivalence between problems]
Two computational problems $A$ and $B$ are said to be equivalent if an oracle which solves one of the problems can be turned, in probabilistic polynomial time, into an oracle that solves the other and in such a way that the incurred noise increase is also polynomial. 
\end{definition}


The existence of equivalence between RLWE and PLWE for different families of number fields has been an important object of study. In \cite[Section 4]{RSW:2018:RPP}, this equivalence was proved for two reduced but large families of polynomials. For cyclotomic polynomials, \cite{SSS:2024:RPNE} showed that when no restriction is imposed on the conductor of the cyclotomic field, the two problems are not equivalent in general. 
However, if certain conditions are imposed on the conductor, it can be shown that these problems are indeed equivalent. In \cite{Blanco:2020:REC}, the author shows that if the conductor is divisible by a bounded number of primes, the condition number of the associated powers-of-the-roots Vandermonde matrix which reigns over the equivalence (which represents the variation induced from the difference between the coordinate and canonical embedding associated with the PLWE and RLWE problems, respectively) is bounded by a polynomial factor and, therefore, the problems are equivalent.

Another topic which has recently gained traction is the study of the R/PLWE problem for the maximal totally real subextension of a cyclotomic field. The interest for this field was introduced in \cite{BL:2022:RPE,Blanco:2021:VMTR}, where it 
was proved that the defining polynomials for these subfields are resistant to PLWE root-based attacks for $\alpha = \pm 2$ as well as for $\alpha=\pm 1$. 

Furthermore, it is a strategic approach to have at hand as large a family of fields and rings as possible, with the idea of being prepared for a potential catastrophic attack against current structured cryptosystems. Indeed, it is worth to mention \cite{cramerducaswesolowsky}, where the authors exploit certain class-field-theoretical properties of the Stickelberger ideal to prove the following fact:

\begin{theorem}Assuming the Generalised Riemann Hypothesis, there exists a quantum polynomial time algorithm, that given an ideal $\frak{a}$ of $\mathcal{O}_K$ for $K$ a cyclotomic number field of prime power conductor, returns an element $v\in \frak{a}$ of Euclidean norm 
$$
||v||\leq N\frak{a}^{1/n}\exp(\mathcal{O}(\sqrt{n})).
$$
\label{CDW}
\end{theorem}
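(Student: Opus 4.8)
The plan is to convert the Ideal-SVP instance into class-group arithmetic and to control each step quantitatively through the Stickelberger ideal. Write the conductor as $m=\ell^k$, set $n=\varphi(m)$, and identify $G=\Gal(K/\Q)\cong(\Z/m\Z)^\times$. Fix a prime ideal $\mathfrak{p}$ of small (polynomial) norm lying over a small rational prime, and consider the group-ring action $w=\sum_{\sigma}w_\sigma\sigma\mapsto\mathfrak{p}^{\,w}:=\prod_{\sigma}\sigma(\mathfrak{p})^{w_\sigma}$. First I would invoke a quantum class-group algorithm --- running in polynomial time under GRH, following Biasse--Song --- to compute $\mathrm{Cl}(K)$ and to solve the discrete-logarithm instance $[\mathfrak{a}]=[\mathfrak{p}^{\,w}]$, i.e.\ to output some $w\in\Z[G]$ for which $\mathfrak{a}\,(\mathfrak{p}^{\,w})^{-1}$ is principal. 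Under GRH the class group is generated by conjugates of such small-norm primes, so this relation always exists and can be produced explicitly.

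Next I would bring in Stickelberger's theorem. With $\theta=\sum_{a\in(\Z/m\Z)^\times}\{\,a/m\,\}\,\sigma_a^{-1}$ the Stickelberger element and $\mathcal{S}\subseteq\Z[G]$ the associated Stickelberger ideal, the theorem guarantees that $\mathfrak{p}^{\,s}$ is principal for every $s\in\mathcal{S}$; hence replacing $w$ by $w-s$ keeps $\mathfrak{a}\,(\mathfrak{p}^{\,w-s})^{-1}$ principal. The central step is to exhibit an \emph{explicit short basis} of $\mathcal{S}$ under the coefficient embedding into $\R[G]$ and to bound its covering radius by $\tilde{O}(\sqrt{n})$. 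Granting this, I would decode the target $w$ against $\mathcal{S}$ by nearest-plane/CVP with the short basis, obtaining $s\in\mathcal{S}$ with $\|w-s\|_1=\tilde{O}(\sqrt{n})$. Consequently $\mathfrak{c}:=\mathfrak{p}^{\,w-s}$ is an essentially integral, near-balanced ideal (a product of Galois conjugates of $\mathfrak{p}$) of norm $N(\mathfrak{p})^{\|w-s\|_1}=\exp(\tilde{O}(\sqrt{n}))$, and $\mathfrak{b}:=\mathfrak{a}\,\mathfrak{c}^{-1}$ is principal, with $\mathfrak{a}=\mathfrak{b}\,\mathfrak{c}$.

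To finish, since $\mathfrak{b}=(g')$ is principal I would apply the quantum Principal Ideal Problem solver (again Biasse--Song, polynomial time under GRH) to extract some generator, and then the Cramer--Ducas--Peikert--Regev short-generator recovery to replace it by a short, balanced generator $g'$. This last step exploits that the logarithmic embedding of the cyclotomic units of a prime-power cyclotomic field is well-conditioned, so that rounding in the log-unit lattice incurs error only of the order of its covering radius $\tilde{O}(\sqrt{n})$, yielding $\|g'\|_2\le N\mathfrak{b}^{1/n}\exp(\tilde{O}(\sqrt{n}))$. Choosing a short, balanced $t\in\mathfrak{c}$ and setting $v:=g'\,t$ gives $v\in(g')\mathfrak{c}=\mathfrak{a}$; since $|\sigma_i(v)|=|\sigma_i(g')|\,|\sigma_i(t)|$ and both factors are balanced, $|\sigma_i(v)|\approx N\mathfrak{b}^{1/n}N\mathfrak{c}^{1/n}=N\mathfrak{a}^{1/n}$ up to the covering-radius slacks, so that $\|v\|\le N\mathfrak{a}^{1/n}\exp(\mathcal{O}(\sqrt{n}))$, as required.

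The quantum backbone (class-group and PIP solvers) and the short-generator recovery are available off the shelf, so the genuinely hard and novel content is the explicit short basis of the Stickelberger lattice together with the $\tilde{O}(\sqrt{n})$ covering-radius bound. I expect this to be the main obstacle. I would attack it by exhibiting short Stickelberger relations --- for instance those built from elements such as $(1+\sigma_c)\theta$ and differences of the Stickelberger generators --- and analyzing the resulting highly structured lattice inside $\Z[G]$; bounding its covering radius by $\tilde{O}(\sqrt{n})$ is the crux, since it is precisely this bound that propagates, through $N\mathfrak{c}$ and the balancing of $v$, into the final $\exp(\mathcal{O}(\sqrt{n}))$ approximation factor.
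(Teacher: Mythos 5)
First, a point of order: the paper does not prove this statement at all. Theorem \ref{CDW} is quoted as a known result of Cramer--Ducas--Wesolowski \cite{cramerducaswesolowsky}, with its refinement cited as \cite[Theorem 5.1]{CDW2}, so your proposal can only be compared against that cited work. Measured that way, your sketch reproduces the CDW architecture faithfully: a quantum class-group/discrete-logarithm computation (Biasse--Song) under GRH to write $[\mathfrak{a}]=[\mathfrak{p}^{w}]$, decoding $w$ against short Stickelberger relations to obtain a close principal multiple, and then the quantum principal-ideal solver combined with Cramer--Ducas--Peikert--Regev short-generator recovery via cyclotomic units, which contributes the factor $\exp(\tilde{O}(\sqrt{n}))$. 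That is indeed the right blueprint.

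Nevertheless, there are genuine gaps in the plan as written. The decisive one is the \emph{plus part} of the group ring. Writing $\tau$ for complex conjugation, one has $(1+\tau)\theta=\sum_{\sigma\in G}\sigma$, because $\{a/m\}+\{-a/m\}=1$; hence the projection of the Stickelberger ideal $\mathcal{S}$ to the $(+1)$-eigenspace of $\tau$ is only the line spanned by the norm element, so $\mathcal{S}$ has rank $n/2+1$ rather than $n$. Decoding $w$ against $\mathcal{S}$ therefore cannot shorten the $\tau$-invariant component of $w$, and your claimed bound on $\|w-s\|_1$ is unattainable for general $w\in\Z[G]$. The actual proof must first pass to the relative class group $\mathrm{Cl}^-$ (e.g.\ via $\mathfrak{a}\mapsto\mathfrak{a}^{1-\tau}$) and handle classes coming from the maximal real subfield separately; this is exactly where $h_K^+$ enters --- as a heuristic assumption $h^+\leq \mathrm{poly}(n)$ in the original version of \cite{cramerducaswesolowsky}, or as an explicit factor of the running time in \cite{CDW2}, the refinement the paper mentions immediately after the theorem. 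Secondary issues: (i) GRH (Bach's bound) gives generation of the class group by \emph{all} primes of small norm, not by the Galois orbit of a single prime, so the existence of $w$ with $[\mathfrak{a}]=[\mathfrak{p}^{w}]$ itself needs the extra arguments CDW supply, and again only for $\mathrm{Cl}^-$; (ii) your covering-radius target $\tilde{O}(\sqrt{n})$ for the Stickelberger lattice is provably false: its index in the minus part of $\Z[G]$ is roughly $h^-=\exp(\tilde{O}(n))$, so $\det^{1/\mathrm{rank}}\approx\sqrt{n}$ and the covering radius is $\tilde{\Omega}(n)$; what is true and what suffices is an $\ell_1$ decoding error of order $\tilde{O}(n^{3/2})$, since the exponent gets divided by $n$, giving $N\mathfrak{p}^{\|w-s\|_1/n}=\exp(\tilde{O}(\sqrt{n}))$; (iii) in the final assembly it is cleaner (and is what CDW do) to make the small ideal $\mathfrak{c}$ integral with $\mathfrak{a}\mathfrak{c}$ principal, so that the short generator of $\mathfrak{a}\mathfrak{c}$ already lies in $\mathfrak{a}$; your construction $v=g't$ requires a short element $t\in\mathfrak{c}$, which is itself an ideal-SVP instance --- though the cheap choice $t=N\mathfrak{c}\in\mathfrak{c}$ would rescue it.
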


Even more striking is the recent refinement of Theorem \ref{CDW}, which shows that under mild but technical conditions, the quantum algorithm runs in probabilistic polynomial time in $[K:\mathbb{Q}]$, $h_K^{+}$ and $\log(N\frak{a})$, where $h_K^{+}$ is the narrow class number of $K$. The reader is referred to \cite[Theorem 5.1]{CDW2} for further details.

Regardless of the theoretical nature of these results, they suggest that the ease and amenable arithmetical features that cyclotomic fields enjoy might give rise to vulnerabilities or at least theoretical concerns. It is also noteworthy that during the last three years, several research teams have published a number of attacks against cyclotomic-based RLWE schemes. Even though some results have turned out to be incorrect and consequently withdrawn, this shows that the cryptanalysis of cyclotomic-based primitives is an active research area, and the use of more subtle number-theoretical and class-field-theoretical ideas might yield critical improvements in the existing cryptanalytic algorithms.


\subsection{Contributions and Organization}


In this work, we continue the mathematical analysis of maximal totally real subextensions of cyclotomic fields. In particular, we focus in the case of conductor $n = 2^r3^s$ with $r>1$ and $s\geq 1$. Notice that for this family of fields the hypothesis of Theorem \ref{CDW} does not apply, i.e., the ring of integers of this maximal totally real number field is not an ideal of the ring of integers of the cyclotomic field. Moreover, the conductor can be chosen not to be a prime power. Hence, these extensions are not under the theoretical threat provided by that result. Beyond that, the main contributions of our work are:

First, we prove that the RLWE and PLWE problems are equivalent for this family of fields. Second, we demonstrate that these rings exhibit fast quasilinear multiplication. Finally, a numerical study is carried out related to small roots of cyclotomic polynomials and the ones corresponding to the maximal real subextensions of matching degree. It will be shown that heuristically, the polynomials generating the real subextensions may be less likely to have other \textit{weak} small roots $\alpha \in S = \{\pm 2, \pm 3, \pm 4, \pm 8\}$ than cyclotomic polynomials, a fact that sparks further interest in their study to back PLWE schemes. We constructed the set $S$ to consist of small powers of 2 and the prime 3 in order to witness if primes and powers of primes exhibit similar behaviour as roots. Conclusions are drawn in different parameter regimes, showing that the maximal real subextensions provide more frequent secure instances for some parameters. Regardless of these promising findings, this work should be regarded as a mathematical study rather than a suggestion for a new PLWE setting. 

Previous works related to ours include \cite{Blanco:2021:VMTR} and \cite{BL:2022:RPE} addressing the RLWE--PLWE equivalence for maximal totally real cyclotomic subextensions, and \cite{lyubashevsky2019nttru}, where an NTT-based multiplication scheme is proposed for the non-power-of-two $2^r3^s$-th cyclotomic fields, with $r,s\geq 1$.


This work is organized as follows: Section \ref{sec:DCT} introduces the Discrete Cosine Transform ($\mathsf{DCT}$), a variation of the Discrete Fourier Transform (DFT). The DCT will be of importance to achieve efficient polynomial multiplication over the totally maximal real subfields. In Section \ref{sec:DCT}, we also recall the notion of condition number, which is the formal tool to capture the notion of the RLWE--PLWE equivalence. We exploit basic properties of the $\mathsf{DCT}$ to give an upper bound for the condition number of a certain matrix that is used in the proof of the RLWE--PLWE equivalence in the next section.

Section \ref{sec:RLWE-PLWE_equivalence} provides a proof of the RLWE--PLWE equivalence for the totally maximal real subextension of the cyclotomic field $\Q(\zeta_n)$, where $\zeta_n$ is a primitive $n$-th root of unity in $\C$ and $n = 2^r3^s$. We make use of a basis of modified Chebyshev polynomials instead of the usual power basis to polynomially bound the condition number of the matrix associated to the canonical embedding.

Section \ref{sec:fast_multiplication_via_DCT} shows that by means of the $\mathsf{DCT}$ efficient polynomial multiplication can be carried out in the modified Chebyshev basis with complexity of order $\mathcal{O}(m \log m)$ on the conductor $m$. 

Section \ref{sec:PLWE_in_maxreal} makes the base change computations explicit by again using the $\mathsf{DCT}$ and Chebyshev interpolation. The efficiency of the base change is also quasilinear, meaning $\mathcal{O}(m \log m)$, where $m$ is the degree of the polynomials in the PLWE scheme. Therefore, we can conclude that for the totally maximal real subextension of a cyclotomic field of conductor $m = 2^r3^s$, we can simultaneously inherit the security warranty from the associated number field and RLWE problem to the PLWE problem, while maintaining asymptotically fast polynomial multiplication. 

Finally, Section \ref{sec:root_computations} gives heuristic evidence for the resistance of the maximal totally real cyclotomic polynomials against root-based attacks. Our numerical computations show that in certain parameter regimes the minimal polynomials of $\zeta_n+\zeta_n^{-1}$ may be less prone to having small roots than their cyclotomic counterparts, therefore making them less likely to be vulnerable to root-based attacks for the PLWE paradigm.

\section{The Discrete Cosine Transform, $\mathsf{DCT}$}
\label{sec:DCT}
The Discrete Cosine Transform ($\mathsf{DCT}$) is widely used in many
digital signal processing applications. Multiple fast algorithms for different types of the Discrete Cosine Transform have been reported in the literature \cite{bi1998dct, hou1987fast, kok1997fast, ahmed1974discrete}. In this paper, we shall use types II and III, which are mutual inverses when scaled properly.

\begin{definition}[$\mathsf{DCT}$]
Let $N \in \Z^+$ and $a(k), \, k= 0,1,\dots, N-1$ a real sequence. The non-scaled type-III Discrete Cosine Transform of $a(k)$ is the sequence $\hat{a}(j)$ defined by 
$$\hat{a}(j) = \frac{a(0)}{2} + \sum_{i=1}^{N-1} a(i)\cos\left( \frac{2\pi(2j +1) i}{4N}\right),  \quad 0\leq j \leq N-1.$$

The inverse of the type-III $\mathsf{DCT}$ is given by the type-II $\mathsf{DCT}$. The non-scaled type-II Discrete Cosine Transform of the sequence $a(k)$ is a new sequence $a'(j)$ given by
$$ a'(j) =  \sum_{i=0}^{N-1} a(i)\cos \left( \frac{2\pi(2i+1)j}{4N} \right), \quad 0\leq j \leq N-1.$$
\end{definition}

In order to simplify the notation, we introduce an $N \times N$ cosine matrix $\mathbf{C}_N$ and a diagonal scale matrix $\mathbf{S}_N$ defined by 
\begin{align*}
    \left(\mathbf{C}_N \right)_{ij} &= \cos \left(\frac{2\pi(2i +1)j}{4N} \right) \quad \text{for } i,j=0,1,\dots,N-1,
\end{align*}
and
\begin{align*}
     \mathbf{S}_N &= 
     \begin{bmatrix}
         \underset{1 \times 1 }{2} & \underset{1 \times (N-1)}{\mathbf{0}} \\
         \underset{(N-1) \times 1}{\mathbf{0}} & \underset{(N-1) \times (N-1)}{\mathbf{I}}
     \end{bmatrix}.
\end{align*}
In this way, we can write the two transforms in matrix notation:
\begin{equation}
\mathsf{DCT}(\mathbf{a}) := \mathbf{C}_N \mathbf{S}_N^{-1}\mathbf{a},    
\end{equation}
and the inverse transform
\begin{equation} 
\mathsf{IDCT}(\mathbf{a}) := \mathbf{C}_N^T \mathbf{a}. 
\end{equation}

As mentioned before, the scaled type-III $\mathsf{DCT}$ and scaled type-II $\mathsf{DCT}$ are inverses of each other. For the non-scaled versions we have a similar result:

\begin{lemma}\label{lemma dct} For any real sequence $a(k)$, $k=0,1,\dots, N-1$, we have 
$$\mathsf{IDCT} \left( \mathsf{DCT}(\mathbf{a}) \right) = \frac{N}{2}\mathbf{a}.$$
In matrix notation,
\begin{equation}\label{cosine transform}
\mathbf{C}_N^T \mathbf{C}_N \mathbf{S}_N^{-1} = \frac{N}{2} \mathbf{I}_N
\end{equation}
or equivalently 
$$\mathbf{C}_N^T \mathbf{C}_N  = \frac{N}{2} \mathbf{S}_N.$$
\end{lemma}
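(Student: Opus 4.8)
The plan is to reduce all three displayed formulations to the single matrix identity $\mathbf{C}_N^T\mathbf{C}_N = \frac{N}{2}\mathbf{S}_N$ and then to establish this by computing the entries of $\mathbf{C}_N^T\mathbf{C}_N$ directly. The three equations are equivalent: multiplying $\mathbf{C}_N^T\mathbf{C}_N = \frac{N}{2}\mathbf{S}_N$ on the right by $\mathbf{S}_N^{-1}$ gives $\mathbf{C}_N^T\mathbf{C}_N\mathbf{S}_N^{-1} = \frac{N}{2}\mathbf{I}_N$, and applying both sides to an arbitrary vector $\mathbf{a}$ together with the definitions $\mathsf{DCT}(\mathbf{a}) = \mathbf{C}_N\mathbf{S}_N^{-1}\mathbf{a}$ and $\mathsf{IDCT}(\mathbf{b}) = \mathbf{C}_N^T\mathbf{b}$ yields $\mathsf{IDCT}(\mathsf{DCT}(\mathbf{a})) = \frac{N}{2}\mathbf{a}$. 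So it suffices to prove the orthogonality relations for the cosine family.

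First I would write the $(j,k)$ entry as
$$(\mathbf{C}_N^T\mathbf{C}_N)_{jk} = \sum_{i=0}^{N-1}\cos\left(\frac{\pi(2i+1)j}{2N}\right)\cos\left(\frac{\pi(2i+1)k}{2N}\right),$$
and apply the product-to-sum identity $\cos A\cos B = \tfrac12[\cos(A-B)+\cos(A+B)]$ to rewrite it as $\tfrac12\big(T(j-k)+T(j+k)\big)$, where
$$T(m) := \sum_{i=0}^{N-1}\cos\left(\frac{\pi(2i+1)m}{2N}\right), \qquad m\in\Z.$$
The whole lemma then comes down to evaluating $T(m)$ for the relevant values $m = j\pm k$ with $0\le j,k\le N-1$.

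The key computation is to show that $T(m)=0$ whenever $m\not\equiv 0\pmod{2N}$, while $T(0)=N$. I would do this by writing $T(m)=\Re\big[e^{\pi m\sqrt{-1}/(2N)}\sum_{i=0}^{N-1}(e^{\pi m\sqrt{-1}/N})^i\big]$ and summing the geometric series. When $m\not\equiv 0\pmod{2N}$ the ratio $z = e^{\pi m\sqrt{-1}/N}\ne 1$, and the numerator equals $z^N-1 = e^{\pi m\sqrt{-1}}-1 = (-1)^m-1$; this vanishes for even $m$, while for odd $m$ a short simplification (factoring $e^{\pi m\sqrt{-1}/(2N)}$ out of the denominator) reduces the bracket to $\sqrt{-1}/\sin(\pi m/(2N))$, which is purely imaginary and hence has zero real part. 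Thus $T(m)=0$ in both subcases, whereas $T(0)=N$ trivially.

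Finally I would assemble the matrix. Since $0\le j,k\le N-1$, the index $m=j-k$ satisfies $|m|\le N-1$ and $m=j+k$ satisfies $0\le m\le 2N-2$, so neither can be a nonzero multiple of $2N$; the only way to hit $m\equiv 0\pmod{2N}$ is $m=0$. For $j=k=0$ both arguments vanish, giving $\tfrac12(N+N)=N$; for $j=k\ne 0$ only $j-k=0$ contributes, giving $\tfrac12(N+0)=N/2$; and for $j\ne k$ both $j-k$ and $j+k$ are nonzero and lie strictly between $-2N$ and $2N$, so both $T$-values vanish and the entry is $0$. This is precisely $\frac{N}{2}\mathbf{S}_N$, completing the proof. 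The main obstacle is the clean evaluation of $T(m)$ for odd $m$: there the geometric sum does not vanish termwise, so one must verify the purely-imaginary cancellation rather than relying on $z^N-1=0$; the only other point requiring care is the endpoint bookkeeping that $j\pm k$ avoid the nonzero multiples of $2N$.
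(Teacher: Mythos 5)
Your proof is correct, and structurally it follows the same route as the paper: both arguments reduce the lemma to the entrywise computation of $\mathbf{C}_N^T\mathbf{C}_N$, apply the product-to-sum identity, and then evaluate the one-parameter cosine sum $T(m)=\sum_{i=0}^{N-1}\cos\bigl(\pi(2i+1)m/(2N)\bigr)$. The genuine difference is in how that sum is evaluated. The paper multiplies by $2\sin(\theta/2)$ with $\theta=\pi m/N$ and telescopes, getting $\sin(N\theta)-\sin(0)=0$ in a single real-variable stroke valid for all $1\le m\le 2N-1$; you instead sum a geometric series of complex exponentials, which forces an even/odd case split (for even $m$ the numerator $z^N-1=(-1)^m-1$ vanishes, while for odd $m$ you must verify that the sum is purely imaginary). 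Your version buys two small things: the statement $T(m)=0$ for every $m\not\equiv 0\pmod{2N}$ is a cleaner citable fact, and writing each entry as $\tfrac12\bigl(T(j-k)+T(j+k)\bigr)$ treats diagonal and off-diagonal entries uniformly, whereas the paper handles the diagonal separately via the half-angle identity $\cos^2 x=\tfrac12(1+\cos 2x)$ before invoking orthogonality. The paper's telescoping argument, in exchange, is shorter and stays entirely within real trigonometry. Your endpoint bookkeeping --- that $j\pm k$ with $0\le j,k\le N-1$ never hits a nonzero multiple of $2N$ --- is exactly the point the paper uses implicitly, and you are right that it is where care is needed.
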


\begin{proof}
First, let us deduce the following orthogonality relation 
\begin{equation}
\sum\limits_{k=0}^{N-1} \cos \left( \frac{2\pi(2k +1)j}{4N} \right) = 0.
\end{equation}
Let $j=1,2,\dots, 2N-1$ and $\theta = \frac{\pi j}{N}$. Then
\begin{align*}
  2\sin\left(\frac{\theta}{2}\right) \sum_{k=0}^{N-1}\cos\left(\frac{2\pi(2k + 1)j}{4N}\right) & = 2\sin\left(\frac{\theta}{2}\right) \sum_{k=0}^{N-1}\cos\left(\frac{\pi k j}{N} + \frac{\pi j}{2N}\right) \\
  &= \sum_{k=0}^{N-1} 2\cos\left(\theta k + \frac{\theta}{2}\right) \sin\left(\frac{\theta}{2}\right)  \\
  &=  \sum_{i=0}^{N-1} \sin(\theta k + \theta) - \sin(\theta k)\\
  &= \sin(N \theta) - \sin(0)\\
  &= 0.
\end{align*}
     
Next, we compute the diagonal entries $(\mathbf{C}_N^T \mathbf{C}_N)_{jj}.$ For $j = 0$, it is clear that $(\mathbf{C}_N^T \mathbf{C}_N)_{jj} = N$. Otherwise for $j = 1, \hdots, N-1$, we find that
\begin{align*}
    (\mathbf{C}_N^T \mathbf{C}_N)_{jj} &=  \sum_{k=0}^{N-1} \cos^2\left(\frac{2\pi(2k + 1)j}{4N}\right) \\
    &=  \frac{1}{2}\sum_{k=0}^{N-1} \left(1 + \cos\left(\frac{2 \cdot 2\pi(2k + 1)j}{4N}\right) \right ) \\
    &= \frac{N}{2} + \frac{1}{2} \sum_{k=0}^{N-1}\cos\left(\frac{2\pi(2k + 1)j}{2N}\right) = \frac{N}{2}.
\end{align*}
    
Finally, for $i\neq j$ it holds that
\begin{align*}
    (\mathbf{C}_N^T \mathbf{C}_N)_{ij} &=  \sum_{k=0}^{N-1} \cos\left(\frac{2\pi(2k + 1)i}{4N}\right) \cos\left(\frac{2\pi(2k + 1)j}{4N}\right)   \\
    &=  \frac{1}{2}\sum_{k=0}^{N-1} \cos\left(\frac{ 2\pi(2k + 1)(i+j)}{4N}\right ) + \cos\left(\frac{ 2\pi(2k + 1)(i-j)}{4N}\right ) 
    \\
    &= 0.
\end{align*}
\end{proof}
Next, we recall the definition of the condition number of a matrix, a tool with which the notion of equivalence is stated.

\begin{definition}Let $A\in\mathrm{GL}_n(\mathbb{C})$ be an invertible square matrix with complex entries. The condition number is defined as
$$
\kappa_F(A)=||A||_F||A^{-1}||_F,
$$
where $||\phantom{A}||_F$ denotes the Frobenius norm.
\end{definition}
Now, we can prove the following:
\begin{lemma}\label{lemma cond}
    The condition number of the cosine matrix  $\mathbf{C}_N$ is
    \begin{align*}
        \kappa_F(\mathbf{C}_N)^2 &:= ||\mathbf{C}_N ||_F^2||\mathbf{C}^{-1}_N ||_F^2 =  N^2 + \frac{N-1}{2} = \mathcal{O}(N^2).
    \end{align*}
    In particular, $||\mathbf{C}_N ||_F^2 = N + N(N-1) / 2$ and $||\mathbf{C}_N^{-1} ||_F^2 = (2N - 1) / N$.
\end{lemma}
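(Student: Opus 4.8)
The plan is to reduce the entire computation to traces, exploiting the identity $\|A\|_F^2 = \Tr(A^T A)$ valid for any real matrix $A$, together with the clean factorization $\mathbf{C}_N^T\mathbf{C}_N = \tfrac{N}{2}\mathbf{S}_N$ already established in Lemma~\ref{lemma dct}, i.e. equation~\eqref{cosine transform}. Since $\tfrac{N}{2}\mathbf{S}_N$ is diagonal with strictly positive entries, it is invertible, and taking determinants in the factorization forces $\det(\mathbf{C}_N)^2 \neq 0$; hence $\mathbf{C}_N$ is invertible and $\kappa_F(\mathbf{C}_N)$ is well defined. This preliminary remark is worth stating explicitly, since the statement implicitly uses $\mathbf{C}_N \in \mathrm{GL}_N(\C)$.

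For the first factor I would simply write $\|\mathbf{C}_N\|_F^2 = \Tr(\mathbf{C}_N^T\mathbf{C}_N) = \tfrac{N}{2}\Tr(\mathbf{S}_N)$. The matrix $\mathbf{S}_N$ carries the entry $2$ in position $(0,0)$ and $1$ in the remaining $N-1$ diagonal positions, so $\Tr(\mathbf{S}_N) = N+1$ and therefore $\|\mathbf{C}_N\|_F^2 = \tfrac{N(N+1)}{2} = N + \tfrac{N(N-1)}{2}$, exactly as claimed. For the second factor the one genuinely substantive step is to invert the factorization: from $\mathbf{C}_N^T\mathbf{C}_N = \tfrac{N}{2}\mathbf{S}_N$ I obtain $\mathbf{C}_N^{-1} = \tfrac{2}{N}\mathbf{S}_N^{-1}\mathbf{C}_N^T$. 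Substituting this expression and invoking the factorization once more yields
$$
\mathbf{C}_N^{-1}(\mathbf{C}_N^{-1})^T = \tfrac{4}{N^2}\,\mathbf{S}_N^{-1}\bigl(\mathbf{C}_N^T\mathbf{C}_N\bigr)\mathbf{S}_N^{-1} = \tfrac{2}{N}\,\mathbf{S}_N^{-1},
$$
so that $\|\mathbf{C}_N^{-1}\|_F^2 = \Tr\bigl(\mathbf{C}_N^{-1}(\mathbf{C}_N^{-1})^T\bigr) = \tfrac{2}{N}\Tr(\mathbf{S}_N^{-1})$. Here $\mathbf{S}_N^{-1}$ carries $\tfrac12$ in position $(0,0)$ and $1$ elsewhere on the diagonal, whence $\Tr(\mathbf{S}_N^{-1}) = \tfrac{2N-1}{2}$ and $\|\mathbf{C}_N^{-1}\|_F^2 = \tfrac{2N-1}{N}$.

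Finally I would multiply the two factors and simplify, $\kappa_F(\mathbf{C}_N)^2 = \tfrac{N(N+1)}{2}\cdot\tfrac{2N-1}{N} = \tfrac{(N+1)(2N-1)}{2} = N^2 + \tfrac{N-1}{2} = \mathcal{O}(N^2)$. I do not anticipate a real obstacle: the whole argument leans on Lemma~\ref{lemma dct}, and the only point requiring a little care is the bookkeeping of the two distinct scaling conventions encoded in $\mathbf{S}_N$ and $\mathbf{S}_N^{-1}$, and in particular their exceptional $(0,0)$ entries, when passing from the quadratic form $\mathbf{C}_N^T\mathbf{C}_N$ to its inverse. Keeping that exceptional entry straight is what produces the ``$+\tfrac{N-1}{2}$'' correction term rather than a clean $N^2$, so it is the natural place to slip.
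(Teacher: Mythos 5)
Your proposal is correct and follows essentially the same route as the paper's proof: both compute $\|\mathbf{C}_N\|_F^2$ and $\|\mathbf{C}_N^{-1}\|_F^2$ via the trace identity and the factorization $\mathbf{C}_N^T\mathbf{C}_N = \tfrac{N}{2}\mathbf{S}_N$ from Lemma~\ref{lemma dct}, using $\mathbf{C}_N^{-1} = \tfrac{2}{N}\mathbf{S}_N^{-1}\mathbf{C}_N^T$ to reduce the inverse norm to $\tfrac{2}{N}\Tr(\mathbf{S}_N^{-1})$. Your explicit remark on invertibility of $\mathbf{C}_N$ is a small, welcome addition that the paper leaves implicit.
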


\begin{proof}
    We know that $||\mathbf{C}_N||^2 = \operatorname{Tr}(\mathbf{C}_N^T \mathbf{C}_N)$. Therefore, from the previous Lemma \ref{lemma dct}, we have $\mathbf{C}_N^T \mathbf{C}_N  = \frac{N}{2} \mathbf{S}_N,$ and $|| \mathbf{C}_N||^2 = N + N(N-1)/2$. 
    
    On the other hand, by using \eqref{cosine transform} the inverse matrix $\mathbf{C}_N^{-1}$ can be written as
$$ \mathbf{C}_N^{-1} = \frac{2}{N}\mathbf{S}_N^{-1}\mathbf{C}_N^{T}.$$    
It follows that
$$\mathbf{C}_N^{-1}\mathbf{C}_N^{-T} = \frac{2}{N}\mathbf{S}_N^{-1}.$$
Now, we have everything we need to compute the Frobenius norm of the inverse transform,
\begin{align*}
    || \mathbf{C}^{-1}_N||_{F}^2 & = \operatorname{Tr}(\mathbf{C}_N^{-T} \mathbf{C}_N^{-1}) \\
    & = \operatorname{Tr}(\mathbf{C}_N^{-1} \mathbf{C}_N^{-T}) \\
    & = \operatorname{Tr} \left( \frac{2}{N}\mathbf{S}_N ^{-1}\right) \\
    & = \frac{2}{N}\left( \frac{1}{2} + N-1 \right) = \frac{2N-1}{N}.
\end{align*}
Finally, the squared condition number of the cosine matrix $\mathbf{C}_N$ is
    \begin{align*}
        \kappa_F(\mathbf{C}_N)^2 &= ||\mathbf{C}_N||_F^2||\mathbf{C}_N^{-1}||_F^2 = \frac{N(N+1)}{2}\cdot \frac{2N -1}{N} =  N^2 + \frac{N-1}{2}.
    \end{align*}
\end{proof}

\section{RLWE--PLWE equivalence for $n = 2^r 3^s$}
\label{sec:RLWE-PLWE_equivalence}

The study of the equivalence between RLWE and PLWE was first addressed in \cite{DD:2017} for cyclotomic number fields of degree $2^kp$ or $2^kpq$ with $p,q$ primes and $q<p$, and later on in \cite[Section 4]{RSW:2018:RPP} for two particular non-cyclotomic families of polynomials. In \cite{DSS:2022}, the authors obtained an explicit closed formula for the condition number of the $2^kp^l$ cyclotomic field, and in \cite{SSS:2024:RPNE} the authors showed that for an arbitrary conductor, the two problems are not equivalent for cyclotomic fields.

Still within the cyclotomic family, in \cite{Blanco:2020:REC}, it was proved that if the conductor is divisible by a bounded number of primes, the RLWE--PLWE equivalence holds alongside with much sharper bounds for the condition number for conductors divisible by up to $3$ primes. This result was recently generalized in \cite{BPBN} to conductors divisible by up to $6$ primes. The authors also introduce and study cyclo-multiquadratic number fields, a family which also enjoys RLWE--PLWE equivalence and fast multiplication.

Beyond cyclotomics, in \cite{Blanco:2021:VMTR}, the RLWE--PLWE equivalence was proved for the maximal totally real subfield of the $4p$-th cyclotomic field. This result was generalized in \cite{BL:2022:RPE} to conductors of the form $2^rpq$ with $p<q$ primes, as well as for $p=1$ or $q=1$. Furthermore, in this last work it was proved that the defining polynomials of these subfields are resistant to the PLWE root-based attacks for $\alpha = \pm 1, \pm 2$. This gives a reason to pay attention to these fields from a cryptographic point of view. However, cyclotomic polynomials can vanish at $\alpha=2$ modulo suitable primes, and the authors provided some examples.

To set the stage for our result, let $\psi_n = \zeta_n + \zeta_n^{-1} = 2\cos(2\pi / n)$ so $\Q(\psi_n) = \Q(\zeta_n)^+$ is the maximal real subfield of the cyclotomic field $\Q(\zeta_n)$. We know that $[\Q(\psi_n) : \Q] = \phi(n) / 2$, where $\phi$ is Euler's totient function. It is well known that the ring of integers of $\Q(\psi_n)$ is $\Z[\psi_n] \cong \Z[x] / (\Psi_n(x))$, where $\Psi_n(x)$ is the minimal polynomial of $\psi_n$ of degree $\phi(n) / 2$. For these standard claims, we refer to \cite{washington2012introduction}.

Chebyshev polynomials exhibit nice properties with the cosine function, especially the cosine of a multiple of an angle. Since the element of interest $\psi_n = 2\cos(2\pi / n)$ has an extra factor of 2, we introduce a modified family of Chebyshev polynomials.
\begin{definition}
    Let $T_n(x)$ be the Chebyshev polynomial of the first kind of degree $n$. We define the family of polynomials $V_n(x)$ with $V_0(x)= 1$ and 
    \begin{align*}
        V_i(x) = 2T_i(x/2) \quad \text{for } i \geq 1
    \end{align*}
     to be the modified Chebyshev polynomial of degree $i$.
    \label{def:modified_chebyshev_polynomials_V}
\end{definition}
The family of modified Chebyshev polynomials satisfy the property
\begin{align*}
    V_n(2\cos(\theta)) = 2\cos(n\theta) \quad \text{for any } \theta  \text{ and } n \geq 1.
\end{align*}
Alternatively, $V_n(x)$ can be defined by the recursion 
$$V_n(x) = x V_{n-1}(x) - V_{n-2}(x) \mbox{ \; for \;} n \geq 3,$$
with the initializing sequence
$$V_0(x)=1 ,\; V_1(x) = x , \; V_2(x) = x^2 - 2.$$

Fix $n = 2^r 3^s$ and $m = \phi(n)/2 = 2^{r-1}3^{s-1}$. Then $$V := \{V_0(x), V_1(x), \hdots, V_{m-1}(x)\}$$ is basis for $\mathfrak{O} = \Z[x] / (\Psi_n(x))$ as all the polynomials $V_i(x)$ are monic with degree $i$. The Minkowski canonical embedding $\mathcal{M}: \Z[x] / (\Psi_n(x)) \to \R^m$ is given by
\begin{align*}
    a_0V_0(x) + a_1V_1(x) + \hdots + a_{m-1}V_{m-1}(x) &\mapsto \mathbf{M} (a_0, \, a_1, \hdots, \ a_{m-1})^T,
\end{align*}
where
\begin{align*}
    \underset{m \times m}{\mathbf{M}} &=
    \begin{bmatrix}
        1 & 2\cos(2\pi / n) & 2 \cos(2 \pi 2 / n) & \hdots & 2\cos(2\pi (m-1) /n) \\
        \vdots & \vdots & \vdots & \hdots  & \vdots \\
        1 & 2\cos(2\pi \sigma / n) & 2 \cos(2 \pi \sigma 2 / n) & \hdots & 2\cos(2\pi \sigma (m-1) /n) \\
        \vdots & \vdots & \vdots & \ddots & \vdots \\
        1  & \hdots & \hdots & \hdots & 
    \end{bmatrix}
\end{align*}
is an $m$-by-$m$ matrix with $\sigma \in \{1,2,\hdots, n/2\}$ and $(\sigma, n) = 1$.

To show that the PLWE and RLWE problems are equivalent, it is enough to prove that the condition number of the matrix $\mathbf{M}$ is bounded by a polynomial in $n$. We adapt this approach from \cite{DD:2017, RSW:2018:RPP}. For a short summary, see \cite{Blanco:2021:VMTR} and the discussion following Definition 2.3. If we can find a polynomial bound in $n$, we say that $\mathbf{M}$ is well conditioned. To see that $\mathbf{M}$ is well conditioned, first note that the matrix $\mathbf{M}$ is well conditioned if and only if the matrix
\begin{align*}
    \underset{m \times m}{\mathbf{V}} &=
    \begin{bmatrix}
        1 & \cos(2\pi / n) & \cos(2 \pi 2 / n) & \hdots & \cos(2\pi (m-1) /n) \\
        \vdots & \vdots & \vdots & \hdots  & \vdots \\
        1 & \cos(2\pi \sigma / n) & \cos(2 \pi \sigma 2 / n) & \hdots & \cos(2\pi \sigma (m-1) /n) \\
        \vdots & \vdots & \vdots & \ddots & \vdots \\
        1  & \hdots & \hdots & \hdots & 
    \end{bmatrix}
\end{align*}
is well conditioned as it is a simple scaling of the columns of $\mathbf{M}$. In fact, $\mathbf{M} = 2 \mathbf{V} \mathbf{S}_m^{-1}$. We are ready to state our first main result.

\begin{theorem}
    Let $r\geq 3$, $s \geq 1$, and $n = 2^r 3^s$. Then PLWE and RLWE are equivalent for the maximal real subextension of the $n$-th cyclotomic field.
\end{theorem}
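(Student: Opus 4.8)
The reduction step is already in place: by the discussion preceding the statement it suffices to exhibit a polynomial (in $n$) bound on $\kappa_F(\mathbf{M})$, and since $\mathbf{M} = 2\mathbf{V}\mathbf{S}_m^{-1}$ differs from $\mathbf{V}$ only by an overall factor and a rescaling of a single column, it is enough to bound $\kappa_F(\mathbf{V})$. The plan is to control $\mathbf{V}$ through its Gram matrix $\mathbf{G} := \mathbf{V}^T\mathbf{V}$: I will show that $\mathbf{G}$ equals $\tfrac{m}{2}\mathbf{I}_m$ plus a very sparse, explicitly described correction, so that all eigenvalues of $\mathbf{G}$ lie in the interval $[\,m/4,\;m\,]$. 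This makes every singular value of $\mathbf{V}$ of order $\sqrt m$, whence $\|\mathbf{V}\|_F^2 = \operatorname{Tr}(\mathbf{G})$ and $\|\mathbf{V}^{-1}\|_F^2 = \operatorname{Tr}(\mathbf{G}^{-1})$ are both controlled and $\kappa_F(\mathbf{V}) = O(m) = O(n)$.

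To compute $\mathbf{G}$, first I would expand the $(j,k)$ entry with the product-to-sum identity, writing it as $\tfrac12\sum_\sigma[\cos(2\pi\sigma(j+k)/n) + \cos(2\pi\sigma(j-k)/n)]$, where $\sigma$ runs over the residues coprime to $n$ in $\{1,\dots,n/2\}$. The key observation is that such half-range cosine sums are exactly half of Ramanujan sums: pairing $\sigma \leftrightarrow n-\sigma$ (both coprime to $n$, while $\sigma = n/2$ is excluded as it is even) and using that cosine is invariant under this pairing gives $\sum_\sigma \cos(2\pi\sigma\ell/n) = \tfrac12 c_n(\ell)$, where $c_n(\ell) = \sum_{\gcd(a,n)=1}\cos(2\pi a\ell/n)$. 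Hence $\mathbf{G}_{jk} = \tfrac14\big(c_n(j+k)+c_n(j-k)\big)$.

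Now I would exploit the arithmetic of $c_n$. By multiplicativity $c_n(\ell)=c_{2^r}(\ell)\,c_{3^s}(\ell)$, and the prime-power formula shows $c_{p^a}(\ell)\neq 0$ only when $p^{a-1}\mid \ell$; therefore $c_n(\ell)\neq 0$ forces $m = 2^{r-1}3^{s-1}\mid \ell$. For our indices $0\le j,k\le m-1$ we have $|j-k|\le m-1$ and $0\le j+k\le 2m-2<2m<n$, so the only surviving arguments are $j-k=0$ and $j+k\in\{0,m\}$. Evaluating the two nonzero values, $c_n(0)=\phi(n)=2m$ and (from $2^{r-1}\,\|\,m$ and $3^{s-1}\,\|\,m$) $c_n(m)=(-2^{r-1})(-3^{s-1})=m$. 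Consequently $\mathbf{G}$ is $\tfrac{m}{2}\mathbf{I}_m$ corrected only by a $+\tfrac m2$ at $(0,0)$, entries $+\tfrac m4$ along the anti-diagonal $j+k=m$, and the resulting $+\tfrac m4$ at the diagonal position $(m/2,m/2)$ (here $m$ is even since $r\ge 2$).

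Finally, this correction couples index $j$ only with $m-j$, so $\mathbf{G}$ block-diagonalizes into the singleton $[\,m\,]$ at index $0$, the singleton $[\,3m/4\,]$ at index $m/2$, and $2\times 2$ blocks $\left(\begin{smallmatrix} m/2 & m/4\\ m/4 & m/2\end{smallmatrix}\right)$ for each pair $\{j,m-j\}$; I would read off the eigenvalues $\{m,\,3m/4,\,m/4\}$. Thus every singular value of $\mathbf{V}$ lies in $[\sqrt m/2,\sqrt m]$, so $\operatorname{Tr}(\mathbf{G})\le m^2$ and $\operatorname{Tr}(\mathbf{G}^{-1})\le 4$, giving $\kappa_F(\mathbf{V})^2 = \operatorname{Tr}(\mathbf{G})\,\operatorname{Tr}(\mathbf{G}^{-1}) \le 4m^2 = O(n^2)$, and the passage to $\mathbf{M} = 2\mathbf{V}\mathbf{S}_m^{-1}$ changes this only by a constant factor, which proves the equivalence. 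I expect the main obstacle to be the support analysis of $\mathbf{G}$: correctly identifying the half-range sums with Ramanujan sums and then showing that the window $|j\pm k|<2m$ together with the divisibility constraint $m\mid\ell$ annihilates all but the three diagonal/anti-diagonal contributions. This step, rather than the final $2\times2$ eigenvalue computation, is where the hypotheses $n=2^r3^s$ and the precise value $m=\phi(n)/2$ are genuinely used.
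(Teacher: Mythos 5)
Your proof is correct, but it takes a genuinely different route from the paper's. The paper never forms the Gram matrix: it realizes $\mathbf{V}$ as an $m\times m$ submatrix of the $N\times N$ cosine matrix $\mathbf{C}_N$ (with $N=n/4=3m/2$), applies a row permutation and two explicit column-operation matrices built from the identity $\cos(2\pi\sigma(m+j')/n)+\cos(2\pi\sigma(m-j')/n)=2\cos(2\pi\sigma/6)\cos(2\pi\sigma j'/n)$, where $2\cos(2\pi\sigma/6)$ equals $1$ or $-2$ according as $3\nmid\sigma$ or $3\mid\sigma$, so as to reach a block lower-triangular form with $\mathbf{V}$ in the corner; it then bounds $\|\mathbf{V}\|_F^2<\|\mathbf{C}_N\|_F^2$ and $\|\mathbf{V}^{-1}\|_F^2<2N\|\mathbf{C}_N^{-1}\|_F^2$ via the lemma on $\kappa_F(\mathbf{C}_N)$, arriving at $\kappa_F(\mathbf{V})^2=\mathcal{O}(n^3)$. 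You instead compute $\mathbf{G}=\mathbf{V}^T\mathbf{V}$ exactly, and your arithmetic checks out: the involution $\sigma\mapsto n-\sigma$ has no fixed point (as $n$ is even), so the half-range sums are indeed $\tfrac12 c_n(\ell)$; the support condition $m\mid\ell$ and the values $c_n(0)=2m$, $c_n(m)=m$ are right; and the block diagonalization into the singletons $[\,m\,]$, $[\,3m/4\,]$ and $2\times 2$ blocks with eigenvalues $3m/4$, $m/4$ is correct (using $r\geq 3$, which makes $m$ even). Your approach buys a sharper bound, $\kappa_F(\mathbf{V})^2\leq 4m^2=\mathcal{O}(n^2)$ versus the paper's $\mathcal{O}(n^3)$, plus structural information: $\mathbf{V}$ is orthogonal up to a sparse correction, with its full singular spectrum $\{\sqrt{m},\,\sqrt{3m}/2,\,\sqrt{m}/2\}$ made explicit. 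What the paper's route buys is economy of means: it reuses the DCT matrix and its condition-number lemma, which are needed anyway for the fast multiplication algorithm later in the paper, and it requires no number-theoretic input (only trigonometric identities), whereas yours leans on multiplicativity and the prime-power formula for Ramanujan sums. Both arguments rest on the same two reductions you state up front: polynomial conditioning of $\mathbf{M}$ suffices for the equivalence (cited from prior work), and $\mathbf{M}=2\mathbf{V}\mathbf{S}_m^{-1}$ lets one work with $\mathbf{V}$ at the cost of a constant factor.
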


\begin{proof}
To match the denominators of the arguments of the cosines in $\mathbf{V}$ and $\mathbf{C}_N$, we set $n = 4N = 2^r 3^s$. This gives us the relation $N = 3m/2$ for the matrix dimensions. Notice that $\mathbf{V}$ is a $m \times m$ matrix, but $\mathbf{C}_N$ is a larger $N \times N$ matrix. 
Additionally, we can directly see that each entry of $\mathbf{V}$ is of the form $\cos\left( \frac{2\pi \sigma j}{2^r3^s}\right)$ and every entry of $\mathbf{C}_N$ is of the form $\cos\left(\frac{2\pi (2i +1) j}{2^r3^s} \right).$
Since $(\sigma,n)=1$ implies $\sigma$ is odd and $3 \nmid \sigma$, we conclude that $\mathbf{V}$ is actually a submatrix of $\mathbf{C}_N$ associated to the rows $i$ such that $3 \nmid (2i + 1)$. 

Thus, let $\mathbf{P}$ be the row permutation matrix associated to the permutation $p(i)$ of $\{0,1,2,\dots, N-1\}$ such that $3 \nmid 2p(i) + 1$ for $i=0,1,\dots,m-1$ and for the rest of the row indices $i=m,m+1, \dots, N-1$, $3 | 2p(i) +1$. Then the row permutation yields
\begin{align*}
    \mathbf{P} \mathbf{C}_N &=
    \begin{bmatrix*}
        \underset{m \times m}{\mathbf{V}} & \underset{m \times m/2}{\mathbf{B}} \\ \underset{m/2 \times m}{\mathbf{A}} & \underset{m/2 \times m/2}{\mathbf{C}}
    \end{bmatrix*},
\end{align*}
where all the matrices have entries of the form $\cos( 2\pi \sigma j / n)$
with $\sigma$ odd. Furthermore, we have the additional conditions that 
\begin{align*}
    &\text{for } \mathbf{A}, \ j = 0, \hdots, m-1 \text{ and } 3 \mid \sigma, \\
    &\text{for } \mathbf{B}, \ j = m, \hdots, N-1 \text{ and } 3 \nmid \sigma, \\
    &\text{for } \mathbf{C}, \ j = m, \hdots, N-1 \text{ and } 3 \mid \sigma.
\end{align*}

Finally, we observe that for all column indices $j > m$, we have a unique index $j' \in \{1, \hdots, m/2\}$ such that $j = m + j'$. Also, let $\tilde{j} = m - j'$. Then
\begin{align*}
    \cos\left(\frac{2 \pi \sigma j}{n} \right) + \cos\left(\frac{2 \pi \sigma \tilde{j}}{n} \right) &= \cos\left(\frac{2 \pi \sigma m}{n} + \frac{2 \pi \sigma j'}{n} \right)
    + \cos\left(\frac{2 \pi \sigma m}{n} - \frac{2 \pi \sigma j'}{n} \right) \\
    &= 2 \cos\left(\frac{2 \pi \sigma m}{n} \right)\cos\left(\frac{2 \pi \sigma j'}{n} \right) \\
    &= 2 \cos\left(\frac{2 \pi \sigma}{6} \right)\cos\left(\frac{2 \pi \sigma j'}{n} \right).
\end{align*}
The first cosine term in the product can attain two values depending on the divisibility of $\sigma$. If $3 \nmid \sigma$, we know that $(\sigma, 6) = 1$. Hence, we can write $\sigma = 6l \pm 1$ for some $l \in \Z$. The other option is that $3 \mid \sigma$, and then $2 \sigma / 6$ is an odd integer. We separate the two cases as follows:
\begin{align*}
    2 \cos\left(\frac{2 \pi \sigma}{6} \right) &= 
    \begin{cases}
        2 \cdot  \cos\left(\pi (\sigma/3) \right), & \text{ if } 3 \mid \sigma \\
        2 \cdot  \cos\left(\frac{2 \pi (6l \pm 1)}{6}  \right),  & \text{ if } 3 \nmid \sigma
    \end{cases} \\
    &= \begin{cases}
        -2,  &  \text{ if }  3 \mid \sigma \\
        1, &  \text{ if }  3 \nmid \sigma.
    \end{cases}
\end{align*}

We want to use this simplification of the sum of the two cosine terms columnwise on the matrix $\mathbf{PC}_N$. To perform the summation as a column operation on $\mathbf{PC}_N$, let
\begin{align*}
    \underset{N \times N}{\mathbf{R}_1} &= 
    \begin{bmatrix*}
        \underset{m \times m}{\mathbf{I}} & \underset{m \times (m/2)}{\mathbf{D}} \\ 
        \underset{(m/2) \times m}{\mathbf{0}} & \underset{(m/2) \times (m/2)}{\mathbf{I}}
    \end{bmatrix*},
\end{align*}
where
\begin{align*}
    \underset{m \times (m/2)}{\mathbf{D}} &= 
    \begin{bmatrix*}
        \underset{(m/2+1) \times 1}{\mathbf{0}} & \underset{(m/2+1) \times (m/2 -1)}{\mathbf{0}} \\
        \underset{(m/2-1) \times 1}{\mathbf{0}} & \underset{(m/2-1) \times (m/2 -1)}{\mathbf{J}}
    \end{bmatrix*},
\end{align*}
and $\mathbf{J}$ denotes a row-reversed identity matrix.
The column operation $\mathbf{P}\mathbf{C}_N \mathbf{R}_1$ yields
\begin{align*}
    \mathbf{P}\mathbf{C}_N \mathbf{R}_1
    &= 
    \begin{bmatrix*}
        \underset{m \times m}{\mathbf{V}} & \underset{m \times (m/2)}{\mathbf{B'}} \\ \underset{(m/2) \times m}{\mathbf{A}} & \underset{(m/2) \times (m/2)}{\mathbf{C'}}
    \end{bmatrix*}
\end{align*}
where the first columns of $\mathbf{B'}$ and $\mathbf{C'}$ are vectors of lengths $m$ and $m/2$ consisting only of elements $1/2$ and $-1$, respectively. Otherwise, the rest of $\mathbf{B'}$ comprises elements of the form $\cos(2 \pi \sigma j' / n)$ with $3 \nmid \sigma$ and $j' = 1, \hdots, m/2-1$. Similarly, for $\mathbf{C'}$, the elements are of the form $-2\cos(2 \pi \sigma j' /n)$ for $j' = 1, \hdots, m/2-1$ but $3 \mid \sigma$. The coefficients multiplying the cosines are direct result of the simplification step described above.

In order to bound the norm of the block $\mathbf{V}$ in the matrix $\mathbf{PC}_N$, we want to bring the matrix $\mathbf{PC}_N$ to a block lower triangular form with only zero entries in the upper-right block. For another set of column operations, let
\begin{align*}
    \underset{N \times N}{\mathbf{R}_2} &= \begin{bmatrix*}
        \underset{m \times m}{\mathbf{I}} & \underset{m \times (m/2)}{\mathbf{E}} \\ \underset{(m/2) \times m}{\mathbf{0}} & \underset{(m/2) \times (m/2)}{\mathbf{I}}
    \end{bmatrix*},
\end{align*}
where
\begin{align*}
    \mathbf{E} &= 
    \left[
    \begin{array}{c | c}
        \underset{1 \times 1}{-1/2} & \underset{1 \times (m/2 -1)}{\mathbf{0}} \\
        \hline
        \underset{(m/2 -1) \times 1}{\mathbf{0}} & \underset{(m/2 -1) \times (m/2 -1)}{-\mathbf{I}} \\
        \hline
        \underset{(m/2) \times 1}{\mathbf{0}} & \underset{(m/2) \times (m/2 -1)}{\mathbf{0}}
    \end{array}
    \right].
\end{align*}
Then another set of column operations gives us the block triangular matrix
\begin{align*}
    \mathbf{P}\mathbf{C}_N \mathbf{R}_1 \mathbf{R}_2
    &= 
    \begin{bmatrix*}
        \underset{m \times m}{\mathbf{V}} & \underset{m \times (m/2)}{\mathbf{0}} \\ \underset{(m/2) \times m}{\mathbf{A}} & \underset{(m/2) \times (m/2)}{\mathbf{C''}}
    \end{bmatrix*},
\end{align*}
where $\mathbf{C''}$ consists of elements of the form $-3\cos(2 \pi \sigma j' /n)$ for $j' = 0, \hdots, m/2-1$ where $3 \mid \sigma$.

Finally, we combine the column operations into a single matrix
\begin{align*}
    \underset{N \times N}{\mathbf{R}} &= \mathbf{R}_1\mathbf{R}_2 =
    \begin{bmatrix*}
        \underset{m \times m}{\mathbf{I}} & \underset{m \times (m/2)}{\mathbf{F}} \\
        \underset{(m/2) \times m}{\mathbf{0}} & \underset{(m/2) \times (m/2)}{\mathbf{I}}
    \end{bmatrix*},
\end{align*}
where $\mathbf{F} = \mathbf{E} + \mathbf{D}$. Note that the nonzero entries of $\mathbf{E}$ and $\mathbf{D}$ do not overlap. Moreover, the inverse of the matrix $\mathbf{R}$ is
\begin{align*}
    \underset{N \times N}{\mathbf{R}^{-1}} &= 
    \begin{bmatrix*}
        \underset{m \times m}{\mathbf{I}} & \underset{m \times (m/2)}{-\mathbf{F}} \\
        \underset{(m/2) \times m}{\mathbf{0}} & \underset{(m/2) \times (m/2)}{\mathbf{I}}
    \end{bmatrix*}.
\end{align*}
Now we can bound the Frobenius norm of $\mathbf{V}$. From the first block form
\begin{align*}
    \mathbf{P} \mathbf{C}_N &=
    \begin{bmatrix*}
        \underset{m \times m}{\mathbf{V}} & \underset{m \times (m/2)}{\mathbf{B}} \\ \underset{(m/2) \times m}{\mathbf{A}} & \underset{(m/2) \times (m/2)}{\mathbf{C}}
    \end{bmatrix*}
\end{align*}
we observe that $\mathbf{V}$ is a submatrix, so we get the strict bound
\begin{align*}
    ||\mathbf{V}||_F^2 < || \mathbf{P}\mathbf{C}_N ||_F^2 = ||\mathbf{C}_N ||_F^2.
\end{align*}
To bound the norm of the inverse $\mathbf{V}^{-1}$, we use the column reduced form given by the right multiplication by $\mathbf{R}$,
\begin{align*}
    ||\mathbf{V}^{-1}||_F^2 &< || \mathbf{R}^{-1}\mathbf{C}_N^{-1} \mathbf{P}^T ||_F^2 \leq || \mathbf{R}^{-1} ||_F^2 ||\mathbf{C}_N^{-1} \mathbf{P}^T ||_F^2 \\
    &= || \mathbf{R}^{-1} ||_F^2 ||\mathbf{C}_N^{-1}||_F^2 = (N + ||\mathbf{F}||_F^2) ||\mathbf{C}_N^{-1}||_F^2  < 2N ||\mathbf{C}_N^{-1}||_F^2.
\end{align*}
By combining the two bounds we get a polynomial upper bound for the condition number of $\mathbf{V}$, 
\begin{align*}
    \kappa_F(\mathbf{V})^2 &= ||\mathbf{V}{-1}||_F^2 ||\mathbf{V}||_F^2
    < 2N ||\mathbf{C}_N^{-1}||_F^2 ||\mathbf{C}_N||_F^2 = 2N \kappa_F(\mathbf{C}_N)^2.
\end{align*}
As a last step, from Lemma \ref{lemma cond} we know that 
$\kappa_F(\mathbf{C}_N)^2 = \mathcal{O}(N^2)$. Putting everything together yields the desired bound 
\begin{align*}
    \kappa_F(\mathbf{V})^2 = 2N \mathcal{O}(N^2) = \mathcal{O}(N^3) = \mathcal{O}(n^3).
\end{align*}
This concludes the proof.
\end{proof}

\section{Fast multiplication in $\Z[x]/(\Psi_n(x))$ via the $\mathsf{DCT}$}
\label{sec:fast_multiplication_via_DCT}
In this section, we introduce an algorithm for fast multiplication over the ring $\Z[x]/(\Psi_n(x))$. The algorithm is similar to the Number Theoretic Transform (NTT). However, the NTT is typically computed over the ring $\Z[x]/(x^n + 1)$, where $n$ is a power of two yielding the ring of integers of the $2n$-th cyclotomic field \cite{agarwal1975ntt, pollard1971fast}. For non-power-of-two cyclotomic fields, in \cite{lyubashevsky2019nttru} the authors describe an NTT-based multiplication algorithm in the $2^r 3^s$-th cyclotomic field, where $r \geq 1$ and $s \geq 1$. The algorithms developed to compute the $\mathsf{DCT}$ and its inverse $\mathsf{IDCT}$ can be classified into two categories: indirect and direct. Indirect algorithms take advantage of existing fast algorithms, such as the fast Fourier and Hadamard transforms \cite{ahmed1974discrete}. On the other hand, direct algorithms reduce the computational complexity by means of matrix factorization and recursive decomposition as a generalization of the Cooley--Tukey algorithm with the same asymptotic complexity $\mathcal{O}(n \log n)$ \cite{bi1998dct, bi1999fast, hou1987fast, kok1997fast}.

Next, we show how the $\mathsf{DCT}$ and its inverse can be used for fast multiplication algorithms. If $p(x)$ is a polynomial of degree less than or equal to $N-1$, then $p(x)$ can be represented in base $\{V_0(x), V_1(x), \dots, V_{N-1}(x) \}$ as  
$$p(x) = \sum_{i = 0}^{N-1} a_iV_i(x).$$
We define $\mathsf{DCT}(p(x)) := \mathsf{DCT}(\mathbf{a})$, where $\mathbf{a} = (a_0,a_1, \dots,a_{N-1})^T$. Related to the cosine transform, we define a grid of points 
\begin{align}
    x_j := 2\cos\left(\frac{2\pi(2j + 1)}{4N}\right), \quad j = 0, 1,\hdots, N-1.
    \label{def:chebyshev_nodes}
\end{align}
The evaluation of $p(x)$ at the grid points $x_j$ yields
$$p(x_j) = a_0 + 2\sum_{i=1}^{N-1}a_i\cos\left( \frac{2\pi(2j+1)i}{4N} \right).$$
Moreover, we have the following property 
$$\mathbf{\hat{p}} = 2 \, \mathsf{DCT}(p(x)),$$
where $ \mathbf{\hat{p}} =(p(x_0),p(x_1),\dots,p(x_{N-1}))^T$ is the vector of all the evaluations.

Finally, let $p(x), g(x) \in \Z[x] $ and $r(x) = p(x)q(x)$. By definition the evaluations satisfy $r(x_j) = p(x_j)q(x_j)$ for all $x_j$ in the grid. Thus, the vector evaluations satisfy the property  
$$\mathbf{\hat{r}} = \mathbf{\hat{p}} \odot \mathbf{\hat{q}},$$ 
where $\odot$ denotes the componentwise product of vectors, that is,
\begin{equation}
\mathsf{DCT}(r(x)) = 2 \, \mathsf{DCT}(p(x)) \odot \mathsf{DCT}(q(x)).
\end{equation}
This is the approach that we shall use to derive a fast algorithm for multiplication in quotient rings of the form $\Z[x] / (f(x))$ for a special family of modulo polynomials $f(x)$.

\subsection{Case $n = 2^r$}

For $n=2^r$, the maximal real subextension of $\Q(\zeta_n)$ has dimension $m= \phi(n)/2 = 2^{r-2}$. In order to avoid trivialities, we consider only the case $r \geq 3$. 

It is well known that $\Psi_n(x) = V_m(x)$ \cite[Theorem~2.6]{loper2016resultants} where $V_m(x)$ is the modified Chebyshev polynomial from Definition \ref{def:modified_chebyshev_polynomials_V}. Every element in the ring $R:= \Z[x]/(\Psi_n(x))$ can be represented by a polynomial of degree less than or equal to $m-1$. Furthermore, if $p(x),q(x) \in R $ and $N=m$, then $\mathsf{DCT}(p(x))$ and $\mathsf{DCT}(q(x))$ can be computed in order $n \log n$ \cite{bi1998dct, bi1999fast}. 

In addition, since for $n=2^r$ the grid points $x_j$ for $j=0,1,\dots,N-1$ match the roots of $\Psi_n(x) = V_m(x)$, then for every polynomial $f(x) \in \Z[x]$ we have that $f(x_j) = \overline{f}(x_j)$, where $\overline{f}(x) \in R$ denotes the residue of $f(x)$ modulo $\Psi_n(x)$. This implies that if $r(x) = p(x)q(x)$, then $\mathbf{\hat{r}} = \hat{\mathbf{\overline{r}}}$ and
\begin{align*}
\mathsf{DCT}(\overline{r}(x)) &= \mathsf{DCT}(r(x)) \\
&= 2 \, \mathsf{DCT}(p(x)) \odot \mathsf{DCT}(q(x)).
\end{align*}
If $\overline{r}(x) = \sum\limits_{i=0}^{m-1}c_iV_i(x) \in R,$ then $\mathbf{c} = (c_0, c_1, \hdots, c_{m-1})^T$ can be computed using the inverse of the $\mathsf{DCT}$ and Lemma \ref{lemma dct}, that is,
\begin{equation}
\mathbf{c} = \frac{4}{N} \, \mathsf{IDCT}( \mathsf{DCT}(p(x)) \odot \mathsf{DCT}(q(x))).
\end{equation}

The fast implementation of the inverse transform $\mathsf{IDCT}$ requires in total $(r-2)2^{r-2}$ multiplications and $3(r-2)2^{r-3} - 2^{r-2} + 1$  additions \cite{kok1997fast, hou1987fast}. This means the asymptotic complexity of computing $\mathbf{c}$ is $\mathcal{O}(m \log m)$.

\subsection{Case $n=2^r 3^s$}

As before, to avoid trivialities we only consider the cases $r\geq 3$ and $s \geq 1$. For $n=2^r 3^s$, the maximal real subextension of $\Q(\zeta_n)$ has dimension $m = \phi(n) / 2 = 2^{r-1}3^{s-1}$, and the minimal polynomial is $\Psi_n(x) = V_m(x) - 1.$ This fact can be easily verified by computing 
\begin{align*}
V_m\left(2\cos\left(\frac{2\pi}{n} \right)\right) - 1 &= 2\cos\left(\frac{2\pi m}{n} \right) - 1 \\
&= 2\cos\left(\frac{2\pi}{6} \right) -1  \\
&= 0.
\end{align*}
Thus, $\psi_n$ is root of $V_m(x) - 1$ which is a monic polynomial of degree $m$.

To describe the fast multiplication, let $p(x), q(x) \in R = \Z[x]/(\Psi_n(x))$. Then $p(x)$ and $q(x)$ have degrees strictly less than $\deg \Psi_{n}(x) = m$. Therefore, the product $r(x)=p(x)q(x)$ has degree less than or equal to $N-2$ with the choice $N=2m$. We pick $N$ to be the size of the $\mathsf{DCT}$, we write
\begin{align*}
    r(x) = p(x)q(x) = \sum_{i=0}^{N-1} c_iV_i(x).
\end{align*}
Hence the coefficient vector $r(x)$ can be written as
\begin{align*}
    \mathbf{c} &= \frac{2}{N} \mathsf{IDCT}(\mathsf{DCT}(r(x))) \\
    & = \frac{2}{N} \mathsf{IDCT}(2\, \mathsf{DCT}(p(x)) \odot \mathsf{DCT}(q(x)))\\
    & = \frac{4}{N} \, \mathsf{IDCT}(\mathsf{DCT}(p(x)) \odot \mathsf{DCT}(q(x)))
\end{align*}

Both the $\mathsf{DCT}$ and the $\mathsf{IDCT}$ of size $N$ can be computed in $\mathcal{O}(N \log N)$ \cite{bi1999fast, kok1997fast}, even for the composite dimension $N=2^{r}3^{s-1}$. This means that we can compute $\mathbf{c}$ and hence $r(x) = p(x)q(x)$ in $\mathcal{O}(N \log N)$. 

Now, in order to compute the residue $\overline{r}(x) \in R$ of $r(x)$, we notice that $V_m(x) \equiv 1 \mod \Psi_n(x)$. By using the trigonometric identity 
\begin{align*}
    \cos(\alpha + \beta) + \cos(\alpha - \beta) = 2\cos \alpha \cos \beta,
\end{align*}
we obtain for degrees $i = 1, \hdots, m-1$ a formula for the reduction
\begin{align}
V_{m+i}(x) &= V_i(x)V_m(x) - V_{m-i}(x) \nonumber \\ 
&\equiv V_i(x) - V_{m-i}(x) \mod \Psi_n(x).
\label{reduction_mod_psi_n}
\end{align}
By using the above relations on $r(x) = p(x)q(x) = \sum\limits_{i=0}^{N-1}c_iV_i(x)$, we obtain
\begin{align*}
    \overline{r}(x) &= d_0 + d_1 V_1(x) + \hdots + d_{m-1}V_{m-1}(x),
\end{align*}
where the polynomial coefficients $d_i$ can be computed as a sums of the elements of the vector $\mathbf{c}$. The explicit formula is
\begin{align*}
    \mathbf{d}
    = 
    \begin{bmatrix*}
        c_0 + c_m \\
        c_1 + c_{m+1} - c_{2m-1} \\
        \vdots \\
        c_i + c_{m+i} - c_{2m-i} \\
        \vdots \\ 
        c_{m-1} - c_{m+1}
    \end{bmatrix*}.
\end{align*}
The number of operations needed for the reduction of $r(x)$ to $\overline{r}(x)$, meaning the transform from $\mathbf{c}$ to $\mathbf{d}$, is linear in $m$. Hence, the overall complexity is asymptotically dominated by the $\mathcal{O}(N \log N)$ complexity of the $\mathsf{DCT}$ and $\mathsf{IDCT}$. As a conclusion, the overall complexity of computing a product of two elements in $R$ is $\mathcal{O}(m \log m)$.

\subsection{DCT modulo a prime number}

As usual in cryptography and other contexts, in particular in LWE schemes, it is necessary to reduce integers modulo a prime number. The algorithms developed earlier might be difficult to implement numerically due to the nature of values of the cosine function. Nevertheless, this difficulty disappears when working in the quotient ring $R_q = \Z_q[x]/(\Psi_n(x))$ for a suitable choice of the prime $q$.    

For $n=2^r$, let $M = n$. Then it is enough to choose a prime $q$ satisfying $q \equiv 1 \mod M$, because this implies the existence of a primitive $M$-th root of unity, $\zeta_M$, in the finite field $\F_q = \Z/q\Z$. 
For $n = 2^r 3^s$, we choose a prime $q$ such that $q \equiv 1 \mod M$, where $M = 4N =2^{r+2}3^{r-1}$. The value of $M$ is consequence of our choice of size $N$ for the $\mathsf{DCT}$ earlier in Section \ref{sec:fast_multiplication_via_DCT}.

In both cases, for all indices $i$ and $j$, we use 
$$ 2\cos\left( \frac{2\pi(2j+1)i}{4N}\right) = \zeta_M^{(2j+1)i} + \zeta_M^{-(2j+1)i} \in \F_q.$$
Likewise, all the elements of the matrices $\mathbf{C}_N$ and $\mathbf{C}_N^{-1}$ can be seen as elements of $\F_q$, and in particular the $\mathsf{DCT}$ and $\mathsf{IDCT}$ can be computed over the elements of $\F_q$.

In fact, our multiplication algorithm extends naturally to a composite modulus $pq$ for two primes $p$ and $q$ satisfying the equivalence conditions above. This follows as a direct application of the Chinese Remainder Theorem. We use the ring isomorphism $\Z / pq \Z \cong \F_p \times \F_q$ to construct the required root of unity in $\Z / pq \Z$ that maintains its exact order under the projections to $\F_p$ and $\F_q$. This is enough to define the $\mathsf{DCT}$ and $\mathsf{IDCT}$ over $\Z / pq \Z$. Note that the existence of the inverse of $2$ and $N$ is also guaranteed by the isomorphism.

\section{PLWE in $\Z[x] / (\Psi_n(x))$ via fast base change}
\label{sec:PLWE_in_maxreal}
Let $n = 2^r$ or $n = 2^r 3^s$ with the corresponding $m = \phi(n) / 2$, and consider the PLWE problem in the quotient ring $R = \Z[x] / (\Psi_n(x))$. Under the PLWE--RLWE equivalence shown in Section \ref{sec:RLWE-PLWE_equivalence}, we want to keep the security guarantees of the PLWE scheme which is typically stated in the power basis. This means that we must sample the polynomials $a(x)$ and $s(x)$ in the power basis with coefficient vectors drawn uniformly on $\mathbb{F}_q^m$.

To obtain fast multiplication of $a(x)$ and $s(x)$ in the ring $R$, the idea is to first perform a change of basis and compute the product $a(x) \cdot s(x)$ in the modified Chebyshev basis $\{V_0(x), V_1(x), \hdots, V_{m-1}(x)\}$ using the results from Section \ref{sec:fast_multiplication_via_DCT}. After the fast computations in the modified Chebyshev basis, we perform another change of basis back to the power basis giving us the product $a(x)s(x)$ in the original basis.

In fact, all of these operations can be computed using a total of $\mathcal{O}(n \log n)$ arithmetic operations. The quasilinear complexity of the multiplication in the modified Chebyshev basis was already proved in Section \ref{sec:fast_multiplication_via_DCT}. This section focuses on proving that the two change of basis transforms can also be done in $\mathcal{O}(n \log n)$. This result is stated as the following lemma.
\begin{lemma}
Given a polynomial of degree less than or equal to $m-1$, the complexity of the change of basis between the power basis $\{1, x ,x^2, \dots, x^{m-1} \}$ and $\{ V_0(x), V_1(x), \dots$ $ ,V_{m-1}(x)\}$ is $\mathcal{O}(m \log m)$.
\label{lemma:fast_base_change}
\end{lemma}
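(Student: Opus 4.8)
The goal is to show that converting between the power basis $\{1,x,\dots,x^{m-1}\}$ and the modified Chebyshev basis $\{V_0,\dots,V_{m-1}\}$ costs $\mathcal{O}(m\log m)$ in each direction. The key observation is that both change-of-basis maps are linear, so each is represented by an $m\times m$ matrix; a naive matrix–vector product would cost $\mathcal{O}(m^2)$, and the whole point is to avoid this by exploiting the structure of the $V_i$.

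My plan is to reduce each direction to an evaluation/interpolation problem on the cosine grid, so that the fast $\mathsf{DCT}$ and $\mathsf{IDCT}$ from Section~\ref{sec:fast_multiplication_via_DCT} can be invoked as black boxes. Concretely, I would proceed as follows.

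\textbf{Chebyshev-to-power direction.} Suppose $p(x)=\sum_{i=0}^{m-1} a_i V_i(x)$ and I want its power-basis coefficients. The strategy is to evaluate $p$ at a set of $m$ nodes, then interpolate in the power basis.
First, since I already have the $V_i$-coefficients $\mathbf{a}$, I can compute the evaluations $\hat{\mathbf{p}}=(p(x_0),\dots,p(x_{m-1}))^T$ at the cosine grid points $x_j$ of \eqref{def:chebyshev_nodes} using the identity $\hat{\mathbf{p}}=2\,\mathsf{DCT}(\mathbf{a})$ established earlier, which is one fast $\mathsf{DCT}$ of cost $\mathcal{O}(m\log m)$. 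Then I need to recover the power-basis coefficients of the unique degree-$(m-1)$ polynomial taking these values at the $x_j$. Because the nodes $x_j=2\cos\bigl(2\pi(2j+1)/(4m)\bigr)$ are (up to the factor $2$) the standard Chebyshev nodes, this is exactly Chebyshev interpolation, which is known to be computable in $\mathcal{O}(m\log m)$ via the $\mathsf{DCT}$ together with a fast conversion between the Chebyshev-coefficient representation and the monomial representation.

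\textbf{Power-to-Chebyshev direction.} This is the inverse map, so I would run the above steps in reverse: start from the power-basis coefficients, evaluate $p$ at the grid $x_j$ (fast, using the structure of the nodes), and then apply $\mathsf{IDCT}$ together with the scaling of Lemma~\ref{lemma dct} to read off the $V_i$-coefficients from $\hat{\mathbf{p}}$, again in $\mathcal{O}(m\log m)$. The two directions are genuinely inverse because the $\mathsf{DCT}$/$\mathsf{IDCT}$ pair is invertible by Lemma~\ref{lemma dct}.

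\textbf{Main obstacle.} The step I expect to be the real work is the conversion between the monomial (power) representation and the Chebyshev/$V_i$-coefficient representation at the polynomial level, independent of the evaluation step. Evaluating and interpolating on the cosine grid is clean via the $\mathsf{DCT}$, but one still has to argue that passing between "coefficients in $\{V_i\}$" and "coefficients in $\{x^i\}$" for a polynomial of degree $<m$ can itself be done in quasilinear time. The honest way to handle this is to either (i) fold both conversions into a single evaluation-then-interpolation pipeline so that the monomial coefficients are obtained directly from the samples, or (ii) invoke a fast monomial$\leftrightarrow$Chebyshev basis conversion algorithm. I would favor route (i), since it keeps everything inside the $\mathsf{DCT}$ framework already developed: the composite map (power coefficients) $\to$ (samples at $x_j$) $\to$ ($V_i$ coefficients) is the composition of a fast evaluation map and a fast $\mathsf{IDCT}$, and dimension counting confirms this composite is precisely the desired change of basis. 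The care needed is only in checking that the evaluation of a monomial-given polynomial at the Chebyshev grid is itself $\mathcal{O}(m\log m)$ and in tracking the scaling constants from Lemma~\ref{lemma dct}; both are routine once the pipeline is set up correctly.
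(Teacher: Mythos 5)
Your proposal is correct and takes essentially the same route as the paper: both directions pass through the vector of values on the cosine grid, with the $\mathsf{DCT}$/$\mathsf{IDCT}$ handling the $V$-basis side and fast Chebyshev-node evaluation/interpolation handling the monomial side. The one caveat is that the steps you label ``routine''---evaluating a monomial-basis polynomial at the Chebyshev nodes and interpolating samples back to monomial coefficients in $\mathcal{O}(m \log m)$---are exactly Theorems~\ref{thm:chebyshev_evaluation} and~\ref{thm:chebyshev_interpolation}, nontrivial results that the paper imports from the literature rather than proves.
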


Before we can prove Lemma \ref{lemma:fast_base_change}, we need a few more results regarding the complexity of polynomial evaluation on the Chebyshev nodes $x_j$ \eqref{def:chebyshev_nodes} and Chebyshev interpolation. In \cite{pan1998new, gerasoulis1988fast}, the authors describe an algorithm to evaluate a polynomial $p(x)$ of degree at most $N-1$ at the Chebyshev nodes $x_j$  with complexity $\mathcal{O}(N \log N)$. In \cite{pan1998new}, this result is stated as the following theorem.
\begin{theorem}
Any polynomial of degree at most $N-1$ can be evaluated on the Chebyshev
nodes $x_j$, $j = 0, \hdots, N-1$, at the cost of $\mathcal{O}(N \log N)$ arithmetic operations.
\label{thm:chebyshev_evaluation}
\end{theorem}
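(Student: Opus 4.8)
The plan is to reduce the evaluation to a single Discrete Cosine Transform, exploiting the identity between the cosine nodes $x_j$ and the modified Chebyshev basis that is already recorded in this section. Concretely, write the input polynomial $p$ of degree at most $N-1$ in the $V$-basis, $p(x) = \sum_{i=0}^{N-1} a_i V_i(x)$, and recall from Definition \ref{def:modified_chebyshev_polynomials_V} that $V_i(2\cos\theta) = 2\cos(i\theta)$. Then, with $\theta_j = \tfrac{2\pi(2j+1)}{4N}$, the evaluation vector $(p(x_0),\dots,p(x_{N-1}))^T$ is exactly $2\,\mathsf{DCT}(\mathbf a)$, as noted right after \eqref{def:chebyshev_nodes}. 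Thus the computation splits into two tasks: (i) pass from whatever basis $p$ is given in (the monomial basis being the default) to its $V$-coefficients $\mathbf a$, and (ii) apply the fast $\mathsf{DCT}$.

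Task (ii) is immediate. The type-II/III transforms admit $\mathcal{O}(N\log N)$ direct algorithms via the matrix-factorization and recursive-decomposition schemes of \cite{bi1998dct, bi1999fast, kok1997fast, hou1987fast}, so once $\mathbf a$ is available all $N$ evaluations are produced in $\mathcal{O}(N\log N)$ arithmetic operations.

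The real content, and the step I expect to be the main obstacle, is Task (i): converting the monomial coefficients $\mathbf c = (c_0,\dots,c_{N-1})^T$ into the $V$-coefficients $\mathbf a$ in quasilinear time. Using $x = y + y^{-1}$ and $V_i(y+y^{-1}) = y^i + y^{-i}$, one sees that $p(y+y^{-1}) = a_0 + \sum_{i\geq 1} a_i(y^i + y^{-i})$, so $\mathbf a$ collects the nonnegative-index Laurent coefficients of $p(y+y^{-1})$; equivalently $a_i = \sum_{k} \binom{k}{(k-i)/2}\, c_k$, the sum running over $k \geq i$ with $k \equiv i \pmod 2$. This conversion is the crux precisely because the two problems ``evaluate at cosine nodes'' and ``recover Chebyshev coefficients'' are essentially inverse to one another and of the same difficulty, so one cannot merely bootstrap one from the other (in particular, one may not invoke the change-of-basis Lemma \ref{lemma:fast_base_change}, which itself relies on this theorem). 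A naive conversion costs $\mathcal{O}(N^2)$, and the generic subproduct-tree multipoint-evaluation machinery only yields $\mathcal{O}(N\log^2 N)$. To reach the sharp $\mathcal{O}(N\log N)$ bound one must exploit the special Toeplitz-plus-Hankel (generalized-Hilbert) structure of the binomial conversion matrix and of the Chebyshev-node Vandermonde matrix; this is exactly the structured fast-transform analysis of \cite{gerasoulis1988fast} and \cite{pan1998new}, which I would invoke to carry out the conversion with $\mathcal{O}(N\log N)$ FFT-based operations.

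Combining the two tasks gives the total cost $\mathcal{O}(N\log N) + \mathcal{O}(N\log N) = \mathcal{O}(N\log N)$, as claimed. I would close by remarking that the same reduction shows the inverse problem, namely Chebyshev interpolation that recovers $\mathbf a$ from the sample vector, is one inverse $\mathsf{DCT}$ away by Lemma \ref{lemma dct}; this is the companion ingredient later fed into Lemma \ref{lemma:fast_base_change}.
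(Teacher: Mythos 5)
Your proposal is sound, but note that the paper does not prove Theorem \ref{thm:chebyshev_evaluation} at all: it is imported verbatim from \cite{pan1998new} (with \cite{gerasoulis1988fast} supplying the structured-matrix machinery), which is precisely where you also defer the crux step, namely the $\mathcal{O}(N \log N)$ monomial-to-Chebyshev coefficient conversion. Since your decomposition into a structured basis conversion followed by a fast $\mathsf{DCT}$ is essentially the internal structure of the cited algorithm itself, your attempt coincides in substance with the paper's treatment (a citation), and your observation that one must not invoke Lemma \ref{lemma:fast_base_change} here, to avoid circularity, is correct.
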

The authors also provide the following result on the efficiency of interpolation on the Chebyshev nodes.
\begin{theorem}
Interpolation to a polynomial of a degree at most $N-1$ on the Chebyshev node set can be performed in $\mathcal{O}(N \log N)$ arithmetic operations.
\label{thm:chebyshev_interpolation}
\end{theorem}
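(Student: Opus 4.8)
The plan is to show that interpolation at the node set $\{x_j\}$ is, up to a single scalar, nothing more than one inverse cosine transform, and therefore inherits the quasilinear cost of the fast $\mathsf{IDCT}$ recalled in Section \ref{sec:DCT}. Concretely, write the sought interpolant in the modified Chebyshev basis as $p(x) = \sum_{i=0}^{N-1} a_i V_i(x)$; since $p$ has degree at most $N-1$, this representation exists and is unique. The interpolation problem is then precisely to recover the coefficient vector $\mathbf{a} = (a_0, \dots, a_{N-1})^T$ from the prescribed value vector $\hat{\mathbf{p}} = (p(x_0), \dots, p(x_{N-1}))^T$, and the output in the basis $\{V_i\}$ already pins down the polynomial.

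First I would invoke the evaluation identity established in Section \ref{sec:fast_multiplication_via_DCT}, namely $\hat{\mathbf{p}} = 2\,\mathsf{DCT}(\mathbf{a})$, which encodes exactly that $p(x_j) = a_0 + 2\sum_{i=1}^{N-1} a_i \cos\!\big(2\pi(2j+1)i/(4N)\big)$. This exhibits the evaluation map $\mathbf{a} \mapsto \hat{\mathbf{p}}$ as the linear map $2\mathbf{C}_N\mathbf{S}_N^{-1}$. The crucial structural fact is that this matrix is invertible and its inverse is again a scaled cosine matrix: by Lemma \ref{lemma dct} we have $\mathbf{C}_N^T\mathbf{C}_N = \tfrac{N}{2}\mathbf{S}_N$, so $\mathbf{C}_N$ is nonsingular and the node set yields a well-posed interpolation problem with a unique solution. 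Applying $\mathsf{IDCT}$ to both sides of $\mathsf{DCT}(\mathbf{a}) = \tfrac12\hat{\mathbf{p}}$ and using the relation $\mathsf{IDCT}(\mathsf{DCT}(\mathbf{a})) = \tfrac{N}{2}\mathbf{a}$ from the same lemma, linearity gives $\tfrac{N}{2}\mathbf{a} = \tfrac12\,\mathsf{IDCT}(\hat{\mathbf{p}})$, whence
\begin{align*}
\mathbf{a} = \frac{1}{N}\,\mathsf{IDCT}(\hat{\mathbf{p}}) = \frac{1}{N}\,\mathbf{C}_N^T\,\hat{\mathbf{p}}.
\end{align*}
Thus interpolation is carried out by a single type-II cosine transform of the value vector, followed by a scalar multiplication by $1/N$.

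To close the complexity bound, I would appeal to the fast transform algorithms cited in Section \ref{sec:DCT}: the type-II $\mathsf{DCT}$ (our $\mathsf{IDCT}$) of length $N$ can be evaluated in $\mathcal{O}(N\log N)$ arithmetic operations, including for the composite lengths of the form $2^a3^b$ relevant here. The trailing scaling by $1/N$ costs only $N$ multiplications, so the total is $\mathcal{O}(N\log N)$, as claimed.

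I do not anticipate a serious obstacle, since the heavy lifting — the orthogonality relation $\mathbf{C}_N^T\mathbf{C}_N = \tfrac{N}{2}\mathbf{S}_N$ — is already supplied by Lemma \ref{lemma dct}. The one point that genuinely demands care is the bookkeeping of normalizations: one must keep straight the factor $2$ relating $\hat{\mathbf{p}}$ to $\mathsf{DCT}(\mathbf{a})$, the scale matrix $\mathbf{S}_N$ singling out the zeroth coefficient, and the factor $\tfrac{N}{2}$ coming from the non-scaled transforms, so that the single clean formula $\mathbf{a} = \tfrac1N\mathbf{C}_N^T\hat{\mathbf{p}}$ emerges correctly. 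A secondary point worth a sentence is that the output is delivered in the modified Chebyshev basis $\{V_i\}$; this is exactly the representation consumed downstream and deliberately avoids any reference to the power-basis conversion of Lemma \ref{lemma:fast_base_change}, whose own proof relies on the present theorem and would otherwise be circular.
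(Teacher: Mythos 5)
There is a genuine gap, and it lies exactly at the point you wave away in your closing paragraph. What you prove is that the vector of values $\hat{\mathbf{p}}$ at the nodes determines the coefficients in the \emph{modified Chebyshev basis} via a single transform, $\mathbf{a} = \tfrac{1}{N}\,\mathsf{IDCT}(\hat{\mathbf{p}})$ (your scaling here is correct, and incidentally corrects the factor $\tfrac{N}{4}$ misprinted in \eqref{eq:c_via_IDCT}). But that is only the easy half of the statement. Theorem \ref{thm:chebyshev_interpolation} is quoted from \cite{pan1998new}, where ``interpolation to a polynomial'' means recovering the \emph{power-basis (monomial) coefficients} from the values at the Chebyshev nodes; and this is precisely how the paper consumes the theorem. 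In the proof of Lemma \ref{lemma:fast_base_change}, the direction from the $V$-basis to the power basis proceeds by first computing $\hat{\mathbf{q}}$ from the $V$-coefficients via \eqref{eq:phat_via_DCT}, and then invoking Theorem \ref{thm:chebyshev_interpolation} to obtain ``the coefficients $q_i$ of $q(x)$ in the power basis.'' So your claim that the $V$-basis output ``is exactly the representation consumed downstream'' is false for this paper: if the theorem only returned $V$-coefficients, that direction of Lemma \ref{lemma:fast_base_change} would collapse to a tautology — a $\mathsf{DCT}$ followed by your $\mathsf{IDCT}$ simply reproduces the input $V$-coefficients and never reaches the power basis.

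The missing content is the nontrivial one: a fast ($\mathcal{O}(N\log N)$) passage between the Chebyshev-type representation and the monomial representation, which is what \cite{pan1998new, gerasoulis1988fast} actually supply (the paper itself does not reprove the theorem; its ``proof'' is this citation). Your worry about circularity is legitimate — the theorem cannot be proved by appealing to Lemma \ref{lemma:fast_base_change} — but the correct resolution is not to weaken the theorem's conclusion to the $V$-basis; it is that Pan's algorithm performs the monomial-basis interpolation directly, by combining the cosine-transform inversion you describe with an independent fast basis-conversion routine. In short: your transform algebra and the well-posedness argument via Lemma \ref{lemma dct} are sound, but they prove a strictly weaker statement than the one the paper states, cites, and relies on.
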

We are now ready to prove Lemma \ref{lemma:fast_base_change}

\begin{proof}
First, let us explore the complexity of the change from the power basis $\{1,x,x^2,$ $ \dots, x^{m-1} \}$ to the modified Chebyshev basis
\begin{align*}
    V:= \{V_0(x), V_1(x),\dots, V_{m-1}(x) \}.
\end{align*}
Let $p(x) = \sum_{i=0}^{m-1} p_i x^i$ be a polynomial in $\Z[x]/(\Psi_n(x))$ and consider its expansion in the $V$-basis as
\begin{align*}
    p(x) &= \sum_{i=0}^{m-1} c_i V_i(x).
\end{align*}
For the change of basis, given the coefficients $p_i$ our goal is to find the coefficients $c_i$.

Let $x_j$ denote the Chebyshev nodes from \eqref{def:chebyshev_nodes}.
As was shown in Section \ref{sec:fast_multiplication_via_DCT}, the vector of evaluations $\mathbf{\hat{p}} = \left(p(x_0), p(x_1), \dots, p(x_{N-1}) \right)$ and the vector of coefficient $\mathbf{c} = \left(c_0, c_1, \dots, c_m, 0, \dots, 0 \right)$ satisfy the relations
\begin{align}
\mathbf{\hat{p}} &= 2 \,\mathsf{DCT}( \mathbf{c}) \label{eq:phat_via_DCT}\\
\mathbf{c} &= \frac{N}{4} \mathsf{IDCT}(\mathbf{\hat{p}}). \label{eq:c_via_IDCT}
\end{align}
In other words, given the vector of evaluations $\mathbf{\hat{p}}$, the vector of coefficients $\mathbf{c}$ can be computed in order $N\log(N)$. This is formalized by Equation \eqref{eq:c_via_IDCT}. Vice-versa, given the vector of coefficients $\mathbf{c}$, we can compute the vector of evaluations via a single $\mathsf{DCT}$ with complexity $\mathcal{O}(N\log(N))$.

Theorem \ref{thm:chebyshev_evaluation} states that given a polynomial 
$$p(x) = \sum_{i=0}^{m-1} p_i x^i $$
in the power basis, the vector of evaluations $\mathbf{\hat{p}}$ can be computed in $\mathcal{O}(N \log N)$, and from this, we apply the respective $\mathsf{DCT}$ to retrieve the coefficient vector $\mathbf{c}$ in base $\{V_0(x), V_1(x), \dots, V_{m-1}(x)\}$. Therefore, the total order of the base change is $\mathcal{O}(N \log N)$.

Next, we consider the change of basis from $\{V_0(x), V_1(x), \dots, V_m(x)\}$ to the power basis. Let $q(x) \in \Z[x]/(\Psi_n(x))$ be a polynomial and write its representation in the $V$-basis as 
$$q(x) = \sum_{i=0}^{m-1} a_i V_i(x).$$

By using using \eqref{eq:phat_via_DCT}, the vector of evaluations at the Chebyshev nodes, can be computed with $\mathcal{O}(N \log N)$ arithmetic operations. Note that this statement is similar to that of Theorem \ref{thm:chebyshev_evaluation}, but here the polynomial to be evaluated is given in the $V$-basis. Now, the interpolation problem of finding the coefficients $q_i$ of $ q(x)$ in the power basis, i.e, the form
$$q(x) = \sum_{i=0}^{m-1} q_i x^i,$$
can be solved in order $\mathcal{O}(N \log N)$ due to the interpolation Theorem \ref{thm:chebyshev_interpolation}.
\end{proof}

We are ready to state our second main result.
\begin{theorem}
Given two polynomials $a(x), s(x) \in \Z[x] / (\Psi_n(x))$ in the power basis, their product $a(x) \cdot s(x) \in \Z[x] / (\Psi_n(x))$ can be computed with asymptotic complexity $\mathcal{O}(n \log n)$.
\end{theorem}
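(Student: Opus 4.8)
The plan is to obtain the product by composing the three quasilinear routines already established, moving between the power basis and the modified Chebyshev basis $V = \{V_0(x), \dots, V_{m-1}(x)\}$ so that the multiplication itself is carried out where the $\mathsf{DCT}$ makes it cheap. Concretely, I would (i) convert both inputs from the power basis to the $V$-basis, (ii) multiply them in the $V$-basis, and (iii) convert the reduced result back to the power basis. Each of these three stages has already been shown to run in quasilinear time, so the theorem amounts to chaining them together and tidying up the asymptotics.

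First I would apply Lemma \ref{lemma:fast_base_change} to each of $a(x)$ and $s(x)$, producing their coefficient vectors in the $V$-basis at a cost of $\mathcal{O}(m \log m)$ operations each; this step rests on Theorem \ref{thm:chebyshev_evaluation} (fast evaluation on the Chebyshev nodes) followed by an $\mathsf{IDCT}$. Next I would invoke the fast multiplication of Section \ref{sec:fast_multiplication_via_DCT}: evaluating via the $\mathsf{DCT}$, forming the componentwise product $\mathbf{\hat{r}} = \mathbf{\hat{p}} \odot \mathbf{\hat{q}}$, applying the $\mathsf{IDCT}$, and performing the linear-time reduction $\mathbf{c} \mapsto \mathbf{d}$ coming from the identity $V_{m+i}(x) \equiv V_i(x) - V_{m-i}(x) \bmod \Psi_n(x)$. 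This yields the reduced product $\overline{r}(x) = a(x)s(x) \in R$ expressed in the $V$-basis, again in $\mathcal{O}(m \log m)$. Finally I would apply Lemma \ref{lemma:fast_base_change} in the reverse direction, using Theorem \ref{thm:chebyshev_interpolation} to recover the coefficients in the power basis, at cost $\mathcal{O}(m \log m)$.

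Since the three stages run sequentially and each is $\mathcal{O}(m \log m)$, their sum is $\mathcal{O}(m \log m)$ and no stage dominates asymptotically. It then remains only to translate the bound from $m$ to $n$. Because $m = \phi(n)/2 = 2^{r-1}3^{s-1}$ while $n = 2^r 3^s$, we have $m = n/6$, hence $m = \Theta(n)$ and $\mathcal{O}(m \log m) = \mathcal{O}(n \log n)$, which is the claimed complexity. For the case $n = 2^r$ the identical argument applies with $\Psi_n(x) = V_m(x)$ and $m = 2^{r-2} = \Theta(n)$.

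There is no deep obstacle: the statement is essentially a corollary obtained by chaining the earlier results. The points that require care are bookkeeping rather than mathematics — ensuring that the intermediate representations match up (power basis $\to$ $V$-basis $\to$ evaluation $\to$ reduced $V$-basis product $\to$ power basis), that the degree accounting around the choice $N = 2m$ and the subsequent modular reduction stay within the linear budget already charged in Section \ref{sec:fast_multiplication_via_DCT}, and that the additive composition of three $\mathcal{O}(m \log m)$ costs hides no growing constant.
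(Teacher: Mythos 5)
Your proposal is correct and follows exactly the paper's own argument: the paper's proof simply states that the result is evident from Lemma \ref{lemma:fast_base_change} together with the fast multiplication of Section \ref{sec:fast_multiplication_via_DCT}, which is precisely the three-stage chaining (power basis $\to$ $V$-basis, $\mathsf{DCT}$-based multiplication with modular reduction, $V$-basis $\to$ power basis) that you spell out. Your additional bookkeeping on $m = \Theta(n)$ and the degree accounting is a more explicit rendering of the same proof, not a different route.
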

\begin{proof}
    With Lemma \ref{lemma:fast_base_change} and the fast multiplication algorithm from Section \ref{sec:fast_multiplication_via_DCT}, the proof is evident.
\end{proof}

\section{Small roots of cyclotomic polynomials modulo a prime}
\label{sec:root_computations}

Let $\mathcal{O}_f = \Z[x] / (f(x))$ and $R_q = \mathcal{O}_f / q\mathcal{O}_f$ as before. The root-based attacks from \cite{ELOS:2016:RCN} show that if the polynomial $f(x)$ has a small root modulo the reducing prime $q$, then there exists a polynomial time evaluation attack against the decisional PLWE problem. In this section, we compare heuristically cyclotomic fields and their maximal real subfields for attacks against roots of small residue. The numerical computations are based on the paper \cite{Blanco:2021:VMTR}, where the PLWE--RLWE equivalence is proved for conductors of the form $4p$, where $p$ is a prime. For a reasonable range of parameters, our goal is to show that on average the minimal polynomials of the maximal real extensions are not worse than cyclotomic polynomials when it comes to having small roots. In practice, this means that we can easily find a pair of primes $(p,q)$ such that the polynomial $\Psi_{4p}(x)$ with prime modulus $q$ is not susceptible to attacks against roots of small residue.

We show that between the two families of minimal polynomials, $\Phi_p(x)$ and $\Psi_{4p}(x)$ of the same degree, the maximal real extensions $\Psi_{4p}(x)$ tend to have fewer small roots $x \in S = \{\pm 2, \pm 3, \pm 4, \pm 8\}$ when $p$ is large enough. On the other hand, for cyclotomic polynomials of small degree $p < 200$ and reducing primes $q < 5 \cdot 10^{10}$, we did not encounter any small roots in $S$. 

The roots in the set $S$ are small powers of two. We also included another small prime 3 because our preliminary computations suggested that it might occur as a root relatively often. This choice allowed us to study the frequency of the roots in $S$ to see if it varies a lot between the elements of $S$.

For the prime $q$ and $n = \deg f(x)$, the parameters of the three final lattice-based NIST candidates are:
\begin{itemize}
\item $q=3329$ and $n=256$ for ML-KEM, formerly known as Kyber \cite{FIPS:2024:203}.
\item $q=8380417$ and $n=256$ for ML-DSA, formerly known as Dilithium \cite{FIPS:2024:204}.
\item $q=12289$ and $n=512$ in security level 1 and $n=1024$ in security level of Falcon \cite{Falcon}.
\end{itemize}
The ML-KEM and ML-DSA are the first two lattice-based PQC standards.

In practice, for PLWE instances the sparsity of the vulnerable polynomials allows us to pre-emptively test for small zeros for different prime moduli $q$ and choose a safe one. In this regard, the weak polynomials we present are only of theoretical interest in the PQC realm. However, it is worth mentioning that in \cite{BDM:2024:GARB}, the authors attack the PLWE problem by using as the modulus the NTRU polynomials with parameters:
\begin{itemize}
    \item $n = 509$, $q = 2048$, $\sigma = 8$.
    \item $n = 677$, $q = 2048$, $\sigma = 8$.
    \item $n = 821$, $q = 4096$, $\sigma = 8$.
\end{itemize}
This attack is not directly based on roots but, on distinguishing the PLWE distribution from the uniform distribution with non-negligible advantage by using a refined statistical test. However, like the root-based attacks, this attack takes advantage of the evaluation of the polynomials at special points. In any case, it does not imply that an attack the NTRU cryptosystem exists, as this is backed by a different security consideration, i.e. the NTRU problem itself. Regardless, this is evidence that even the real-world parameters can be affected by novel families of attacks.

The situation is slightly different in the homomorphic encryption setting, where much larger values of both the degree $n$ and the prime modulus $q$ can occur. For instance, within HELib \cite{gouert}, a large initial ciphertext modulus of 780 bits is used, which needs a cyclotomic degree of 55831. Thus, the ring is of dimension $\phi(55831)=54000$ achieving 129 bits of security. More specifically, in \cite[Table 4.2]{albertoreciente} we can observe that $n$ runs in powers of 2 from 1024 to 131072, and $q$, depending on $n$, runs between 29 to 1920 bits. These parameters refer to power-of-two cyclotomic polynomials.


In \cite{BL:2022:RPE,Blanco:2021:VMTR}, the authors show that for the maximal real cyclotomic extensions, the minimal polynomials $\Psi_n(x)$ do not have roots at $x = \pm 2$ for any reducing odd prime $q$ when $n = 2^r k$ for an odd $k$. Moreover, $x = \pm 1$ are never roots modulo an odd prime $q$. As a result, the family of polynomials $\Psi_n(x)$ is immune against the attacks against roots of small order with $x = -1$ and attacks against roots of small residue with $x = \pm 2$. This is not the case for cyclotomic polynomials, since the authors provide an example of a cyclotomic polynomial $\Phi_{61}(x)$ that does have a root at $x = 2$ modulo $q = 9520972806333758431$. Again, we emphasize that this is a theoretical threat due to the size of the modulus $q \approx 10^{19}$ being astronomical and thus infeasible in practice.


For primes from $p < 1500$, we computed the evaluations of $\Phi_p(x)$ and $\Psi_{4p}(x)$ at the elements of the set $S$ and then reduced the evaluations with a varying prime modulus $q$ running as high as $5 \cdot 10^{10}$. Note that thank to the result from \cite{BL:2022:RPE,Blanco:2021:VMTR}, we could exclude the possibility of $\alpha = \pm 2$ being a root of $\Psi_{4p}(x)$ modulo any prime $q$. Furthermore, the minimal polynomials $\Psi_{4p}(x)$ are even \cite{loper2016resultants} so $\alpha \in S$ is root if and only if $-\alpha \in S$ is a root. Thus, in our numerical search the number of roots at $\alpha$ and $-\alpha$ are the same for $\Psi_{4p}(x)$. Finally, note that for a fixed $p$, the degrees of the two polynomials are equal 
\begin{align*}
    \deg \Phi_p(x) &= \phi(p) = p - 1, \\
    \deg \Psi_{4p}(x) &= \phi(4p) / 2 = \phi(p) = p-1.
\end{align*}

The visualization of the computational results is given in Figure \ref{fig:cycloroots_and_maxrealroots} (a) and (b), respectively for $\Phi_p(x)$ and $\Psi_{4p}(x)$. If a polynomial has more than one small root in $S$, the smallest root in absolute value is plotted. 

\newpage
\thispagestyle{empty}
\begin{figure}[H]
 \centering
 \begin{subfigure}[b]{\textwidth}
     \centering
     \includegraphics[width=0.98\textwidth, trim={0 7.75cm 0 7cm}, clip]{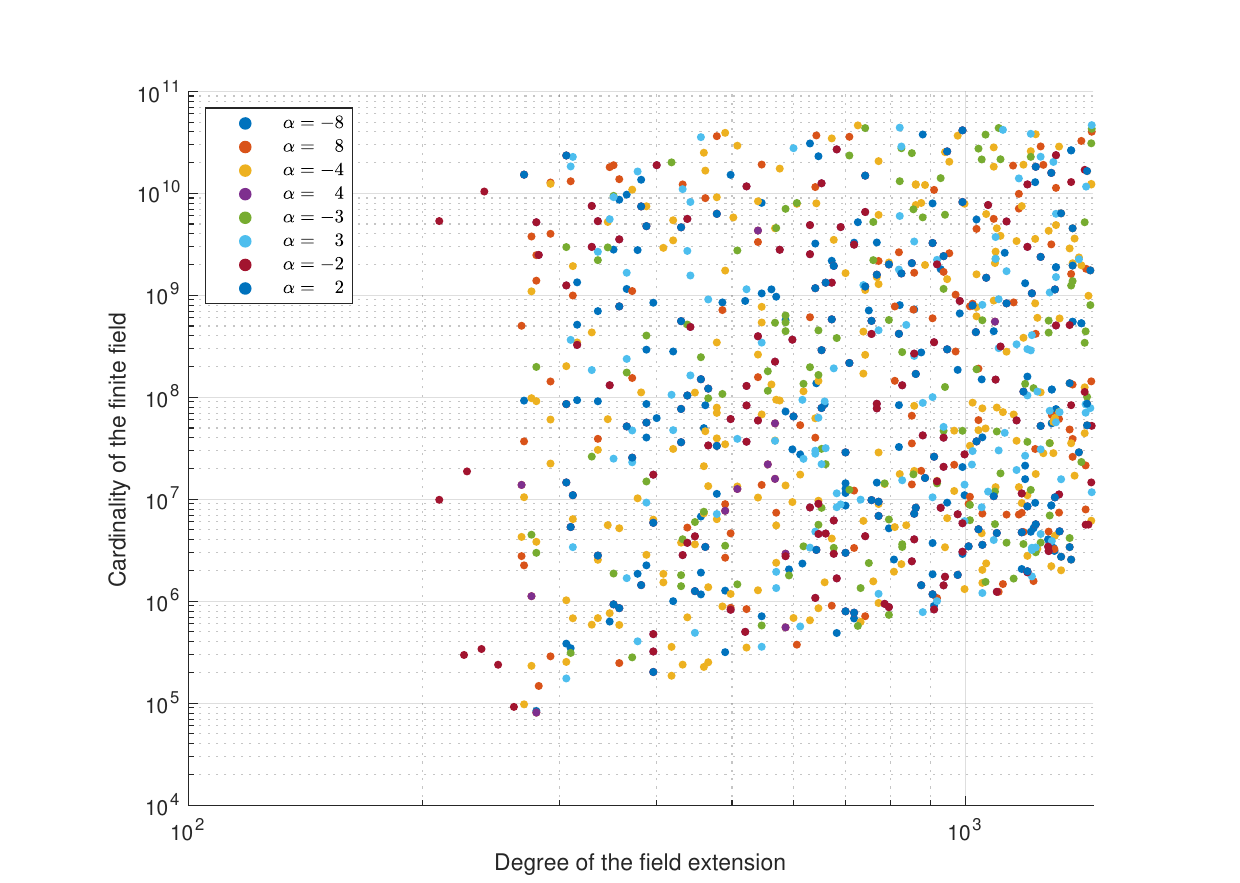}
     \caption{Small roots of the cyclotomic polynomials $\Phi_p(x)$.}
 \end{subfigure}
 \begin{subfigure}[b]{\textwidth}
     \centering
     \includegraphics[width=0.98\textwidth ,trim={0 7.75cm 0 7cm}, clip]{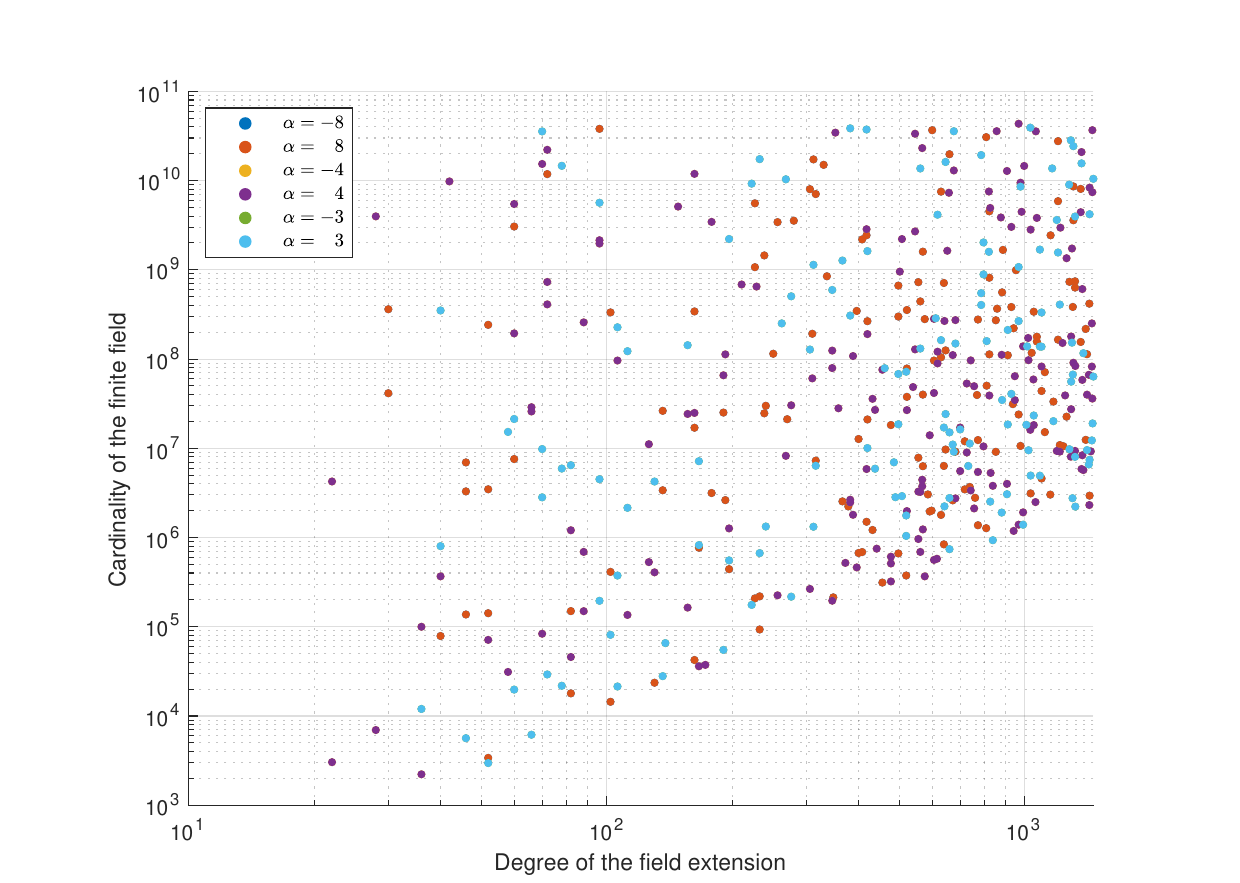}
     \caption{Small roots of the maximal real cyclotomic polynomials $\Psi_{4p}(x)$.}
 \end{subfigure}
    \caption{Each point of the plot is a pair $(p-1,q)$ for primes $p$ and $q$ such that $f_p(\alpha) \equiv 0 \mod q$ for some $\alpha \in S$ where $f_p(x) = \Phi_p(x)$ or $f_p(x) = \Psi_{4p}(x)$. For both polynomials the degree of the field extension is $\deg f_p(x) = p-1$ and the cardinality of the finite field is $q$.}
    \label{fig:cycloroots_and_maxrealroots}
\end{figure}

For primes $p$ less than $1500$ and $q$ less than $5 \cdot 10^{10}$, we found 1191 small roots in $S$ for the cyclotomic polynomials $\Phi_p(x)$. For the same range of primes, the maximal real cyclotomic polynomials had fewer small roots in $S$, only 926. The distribution of roots of the two polynomial families with no indication of the precise value of $\alpha$ is given in Figure \ref{fig:cyclo_vs_maxreal}. The small zeroes of the cyclotomic polynomials $\Phi_p(x)$ seem to appear only at higher degree polynomials with $p > 200$, whereas the maximal real cyclotomic polynomials start having small zeroes already at $p = 23$.

The counts for the number of zeros we found for the evaluation points in $S$ is given in Table \ref{tab:number_of_zeros_found_at_x}. We did not observe a significant difference between the number of zeros at the small powers of 2 or at the small prime $3$. In Table \ref{tab:number_of_zeros_found_at_x} the number of zeros of $\Psi_{4p}(x)$ at $\alpha$ and $-\alpha$ are equal due to the even parity of the polynomials.

In summary, the small zeroes of $\Phi_p(x)$ are more frequent in the primes $p > 200$, that is, the density of the small zeros of $\Phi_p(x)$ appears to be larger as the red points in Figure \ref{fig:cyclo_vs_maxreal} are more concentrated to the right where $p > 200$. On the other hand, $\Psi_{4p}(x)$ has overall less small zeros in the degree range $p < 1500$ and $q < 5 \cdot 10^{10}$, and the zeros are more sparse and evenly distributed in the $p$--$q$-plane. The number of zeros we found does not vary significantly between the elements of $S$ with counts ranging approximately between 100 to 200.
\begin{figure}[H]
    \begin{center}
        \includegraphics[width=\textwidth, trim={0 7.75cm 0 7cm}, clip]{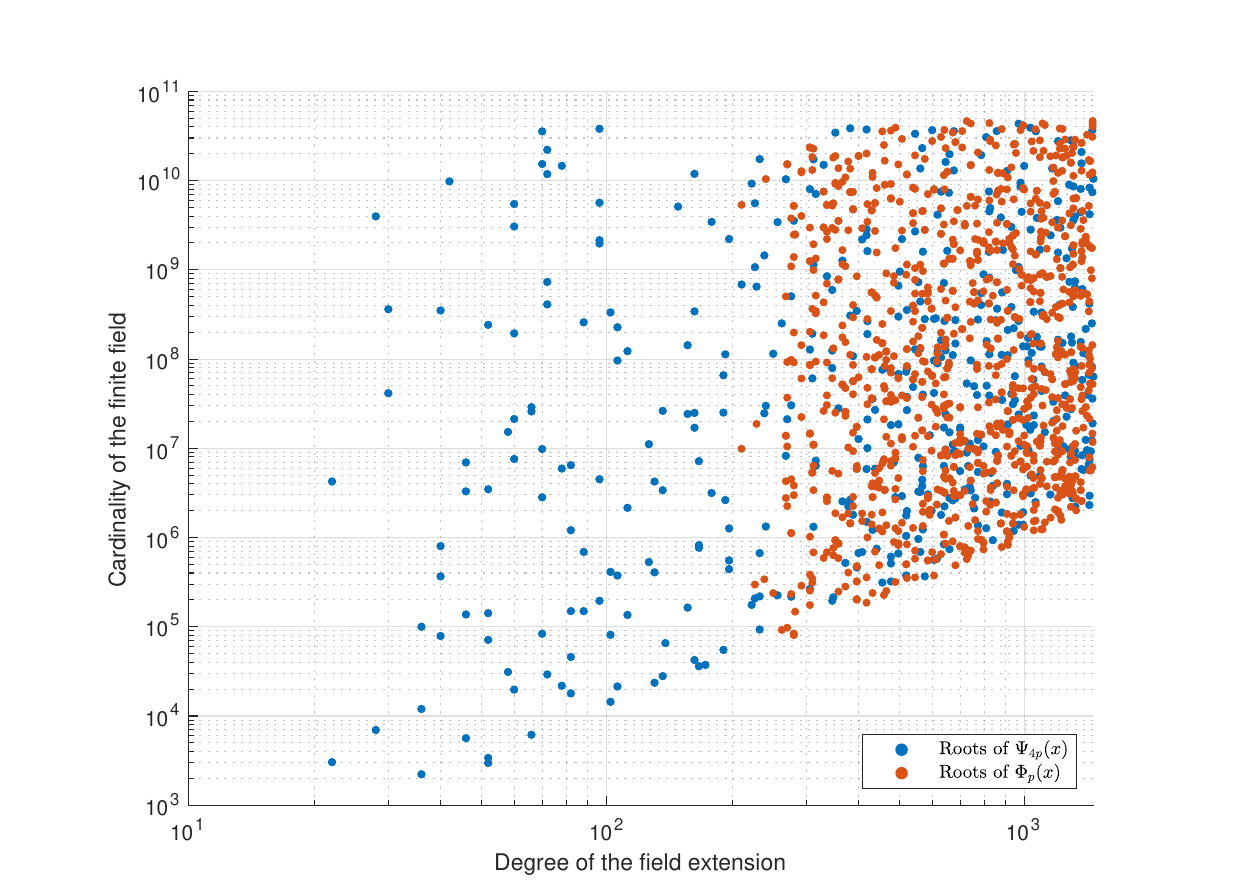}
    \end{center}
    \caption{Small roots $\alpha \in S$ for the pairs $(p,q)$. The points are coloured in red for a root of $\Phi_p(x)$ and in blue for $\Psi_{4p}(x)$.}
    \label{fig:cyclo_vs_maxreal}
\end{figure}

\begin{table}
\caption{Number of zeros of $\Phi_p(x)$ and $\Psi_{4p}(x)$ for elements in $S$ with $p < 1500$ and  $q < 5 \cdot 10^{10}$.}
\label{tab:number_of_zeros_found_at_x}
\begin{tabular}{c|cccccccc}
                & $x = -2$ & $x = 2$ & $x = -3$ & $x = 3$ & $x = -4$ & $x = 4$ & $x = -8$ & $x = 8$ \\\hline
$\Phi_p(x)$     & 105      & 102     & 117      & 121     & 213      & 190     & 121      & 222     \\
$\Psi_{4p}(x)$  & 0        & 0       & 140      & 140     & 172      & 172     & 151      & 151   
\end{tabular}
\end{table}

\bibliography{DCC_fastmult}


\end{document}